\documentclass[12pt, draftclsnofoot, onecolumn]{IEEEtran}
\makeatletter
\def\ps@headings{%
\def\@oddhead{\mbox{}\scriptsize\rightmark \hfil \thepage}%
\def\@evenhead{\scriptsize\thepage \hfil \leftmark\mbox{}}%
\def\@oddfoot{}%
\def\@evenfoot{}}

\makeatother
\pagestyle{headings}

%
\ifCLASSINFOpdf
\else
\fi

\hyphenation{op-tical net-works semi-conduc-tor}

\usepackage{times,amsmath,color,amssymb,epsfig,cite,enumerate,setspace}
\usepackage{subfigure,multirow,bm,stfloats}
\usepackage[para]{threeparttable}

\usepackage{tabularx}
\usepackage{array}
\usepackage{booktabs}
\usepackage{graphicx}
\usepackage{graphics}
\usepackage{pbox}
\usepackage{dsfont}
\usepackage{psfrag}

\newtheorem{Remark}{\it Remark}[section]

\newtheorem{Proposition}{\it Proposition}[section]
\newtheorem{Lemma}{\it Lemma}[section]

\begin{document}
\begin{spacing}{1.75}
%
\title{Multiuser Energy Diversity in Energy Harvesting Wireless Communications}
\author{
Hang Li$^{\dag}$, Chuan Huang$^{\S}$, and Shuguang Cui$^{\dag}$\\
\fontsize{9pt}{9pt}\selectfont$^{\dag}$Department of Electrical and Computer Engineering, Texas A\&M University, College Station, Texas, 77843 USA\\
$^{\S}$University of Electronic Science and Technology of China, Chengdu, Sichuan, 610051 China\\
}
\maketitle

\begin{abstract}
Energy harvesting communication has raised great research interests due to its wide applications and feasibility of commercialization. In this paper, we investigate the {\it multiuser energy diversity}. Specifically, we reveal the throughput gain coming from the increase of total available energy harvested over time/space and from the combined dynamics of batteries. Considering both centralized and distributed access schemes, the scaling of the average throughput over the number of transmitters is studied, along with the scaling of corresponding available energy in the batteries.
\end{abstract}
\begin{IEEEkeywords}
Energy harvesting, energy diversity, multi-user, throughput, scaling.
\end{IEEEkeywords}

\section{Introduction}
For conventional systems with constant power supplies, the multiuser diversity can be exploited when multiple users have independently fading channels. When more users present, it is more likely that the scheduler could find a user with a favorable channel condition. Therefore, the sum or average capacity increases as the number of users getting large. The multiuser diversity gain mainly comes from the effective channel gain \cite{DTse2005}, i.e., from $h_i$ to $\max_{1\le i\le N}h_i$, where $h_i$ denotes the channel power gain. In particular, multiuser diversity with random access or random number of users has been studied in \cite{XQin2006,ABN2012}, and the scaling of the throughput over the number of users was shown to be on the order of $O(\log(N)+\log\log(N))$ \cite{XQin2006}.

Obviously, if all users have identical additive Gaussian channels, there is no multiuser diversity gain, given that all signal channels are the same and the transmission power is constant. However, when powered by energy harvesters, transmitters may have different battery levels because the energy harvesting (EH) rates are random. Then, the variation of battery levels among different users may result in a potential throughput gain over the benchmark, i.e., a point-to-point EH communication system.

In this paper, we revisit the concept of multiuser diversity in EH communications. It is in recent years that harvesting energy from ambient energy sources (e.g., solar, wind, or vibration) has been realized, and enjoys wide applications in the next generation of wireless communication systems, e.g., Internet of Things \cite{SS2011} and heterogeneous networks \cite{JAndrews2014}. Compared against systems with the conventional power supplies that convert fossil fuels into electric energy, EH-based systems are not only more environment friendly, but also more cost-effective by cutting down the service provider utility bills \cite{Bert2007}.

Despite the promising potential, there are two major challenges that hold back the operation of EH wireless systems.
\begin{itemize}
  \item {\it EH Uncertainty}. The power generated by EH is non-deterministic in general due to the dynamic and intermittent characteristics of renewable energy sources, which may not provide a stable power supply for the wireless system. This implies that communications may suffer from unreliability due to the random shortage of energy. Some existing works have studied the impact of such uncertainty brought by EH. For example, the authors in \cite{JAndrews2014} studied a heterogeneous network with multiple base stations (BSs) powered by EH solely. The non-outage probabilities of BSs were derived to analyze the availability region of the network. For a large-scale \emph{ad hoc} network, the author in \cite{KHuang2013} defined a notion called transmission probability to capture the portion of time when the sensor node has enough energy and transmits at a constant power level.
  \item {\it EH Constraints}. These new transmission constraints mean that the available energy at the system up to any time is bounded by its accumulatively harvested energy by then. Many existing works have investigated the throughput optimal or suboptimal transmission strategies under EH constraints. For instance, the optimal throughput has been investigated in point-to-point channel \cite{Ozel2011,CKHo2012}, Gaussian relay channel \cite{HuangZhangCui2013}, and multiuser scenario \cite{LiHang2014B}. In \cite{SUlukus2015}, a comprehensive review was provided on the recent development of EH communications, where throughput-optimal power allocations and scheduling policies were thoroughly discussed under various setups.
\end{itemize}

To address the above two issues, we turn to diversity as motivated by lessons learned in conventional wireless systems, but from a new angle. We study multiuser diversity with respect to the energy availability. To facilitate the analysis, we eliminate the effect of fading channel by considering additive white Gaussian noise (AWGN) channel models only. We investigate the scaling of the available energy across all the users and the scaling of average throughput.

Specifically, assuming that the EH rates are i.i.d. across different users and over time, we explore the multiuser diversity gain over AWGN channels under both the centralized and distributed access schemes:
\begin{itemize}
  \item For the centralized case, we first discover the stationary distribution of the overall battery level, and further analyze the asymptotic behavior of the overall battery level when the number of users goes to infinity. We show that both the greedy scheduling, i.e., choosing the user with the highest available energy at each time, and the rate-suboptimal TDMA access schemes, are all able to explore the multiuser energy diversity, where the average throughput increases on a scale of $\log(\mu N)$, with $\mu$ denoting the mean of energy arrival rate and $N$ denoting the number of users.
  \item For the distributed case, the distribution of energy levels is derived as a function of the channel contention probability, and we show that multiuser energy diversity can be efficiently exploited if the contention probability is on the scale of $O\left(\frac{1}{N}\right)$.
\end{itemize}

The rest of the paper is organized as follows. The system model is given in Section \ref{sysmod}. Then, the multiuser energy diversity is discussed under both centralized and distributed access schemes in Section \ref{multiuser}. Finally, the paper is concluded in Section \ref{fin}.

\section{System Model}\label{sysmod}
In a common multiuser scenario, where multiple transmitter-receiver pairs share one channel for communications, the interference, usually described as packet collisions among users, dominates the unreliability of communication, which significantly impairs the system throughput performance. Thus, we are interested in studying how multiuser diversity affects the system throughput.

To eliminate the multiuser diversity imposed by the channel effect and focus on that form the EH effect, an AWGN channel is adopted for each communication link. Moreover, if two or more transmitters transmit at the same time slot, collisions occur and no data get through\footnote{This is a typical channel model for studying medium access protocols \cite{XQin2006}.}, where length of a time slot is unified (such that the power per slot is of the same magnitude as the corresponding energy per time slot). Suppose that at time slot $t$, only the $n$-th transmitter transmits, the received signal $y^{(n)}_t$ is given by
\begin{equation}
  y^{(n)}_t=\sqrt{P^{(n)}_t}x^{(n)}_t+z^{(n)}_t,\label{RXsignalmulti}
\end{equation}
where $P^{(n)}_t$ is the transmission power, $x^{(n)}_t$ is the transmit signal of unit power, and $z^{(n)}_t$ is the circularly symmetric complex Gaussian (CSCG) noise with zero mean and unit variance. The transmission rate over one time slot could be expressed as $\log\left(1+P^{(n)}_t\right)$ \cite{DTse2005}.

We assume that the EH rates among different transmitters are i.i.d., and each transmitter has a battery with infinite battery capacity\footnote{It is worth pointing out that if the transmitter has no battery but with a constant channel, the analysis is similar to the case with a constant power supply but over i.i.d. fading channel.}. Specifically, let $E^{(n)}_t$ denote the EH rate of the $n$-th transmitter at time slot $t$, which is a Bernoulli random variable such that an energy unit arrives with probability $p$. Furthermore, $\left\{E^{(n)}_t\right\}$ are assumed to be also i.i.d. across time. In addition, the transmitter is able to work in an energy-full-duplex fashion \cite{SUlukus2015}, i.e., it can supply and harvest energy at the same time. Let $B^{(n)}_t$ denote the energy level of the $n$-th user at the beginning of time slot $t$. The power for data transmission at each slot follows a greedy strategy, i.e., the transmitter uses all available energy for data transmission when it accesses the channel.

\section{Multiuser Energy Diversity}\label{multiuser}
In this section, we investigate the multiuser energy diversity under the centralized and distributed access schemes, respectively.

\subsection{Centralized Access}\label{centra}
Assume that the central controller is able to know the energy state information (ESI) of all transmitters at the beginning of each time slot. Here, we consider a greedy scheduling: In each time slot, the controller picks the transmitter with the highest energy level. As such, the transmission power can be written as $M_t=\max_{1\le n\le N}\left\{B_{t}^{(n)}\right\}$, and the instantaneous rate is given by
\begin{equation}
  R^{gr}_t(N)=\log\left(1+M_t\right).
\end{equation}
We use $\mu=p$ to denote the mean of EH rate, and $\sigma^2=p(1-p)$ to denote the variance. Noth that our analysis in this subsection is not limited to the Bernoulli energy arrival model; it works for any arrival model with finite mean and variance.

We aim to analyze the stationary asymptotic behavior of $R^{gr}_t(N)$. The key is to understand how $M_t$ behaves with a large $N$ when $t\rightarrow\infty$. First, we quantify the battery levels when $t\rightarrow\infty$. The following lemma provides a clue to discover the distribution of the battery levels.
\begin{Lemma}\label{stable}
  Energy levels of all transmitters are stable, i.e., $\lim_{t\rightarrow\infty}\mathbb{P}\left\{B_{t}^{(n)}=\infty\right\}=0$ for any $n\in\{1,2,\ldots,N\}$.
\end{Lemma}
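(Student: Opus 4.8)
The plan is to work directly with the Markov chain $\mathbf{B}_t=\left(B_t^{(1)},\ldots,B_t^{(N)}\right)$ and to control the total stored energy $S_t:=\sum_{n=1}^{N}B_t^{(n)}$ through a drift (Foster--Lyapunov-type) argument, crucially exploiting the fact that under greedy scheduling the battery that is emptied is always the one currently holding the most energy.

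First I would write down the one-step update. Let $n_t^\ast=\arg\max_{1\le n\le N}\left\{B_t^{(n)}\right\}$, so the transmitted power is $M_t=B_t^{(n_t^\ast)}$. Because of the energy-full-duplex and ``use-all-available-energy'' rules, $B_{t+1}^{(n_t^\ast)}=E_t^{(n_t^\ast)}$ while $B_{t+1}^{(n)}=B_t^{(n)}+E_t^{(n)}$ for $n\neq n_t^\ast$; summing over $n$ gives the exact recursion $S_{t+1}=S_t-M_t+\sum_{n=1}^{N}E_t^{(n)}$. Since the slot-$t$ arrivals are independent of the history $\mathcal{F}_t$ up to the start of slot $t$, conditioning yields $\mathbb{E}\left[S_{t+1}\mid\mathcal{F}_t\right]=S_t-M_t+N\mu$.

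The key step is the elementary bound $M_t=\max_{1\le n\le N}B_t^{(n)}\ge\frac1N\sum_{n=1}^{N}B_t^{(n)}=\frac{S_t}{N}$, which holds precisely because the greedy policy drains the maximum. Substituting gives $\mathbb{E}\left[S_{t+1}\mid\mathcal{F}_t\right]\le\left(1-\frac1N\right)S_t+N\mu$, hence $\mathbb{E}[S_{t+1}]\le\left(1-\frac1N\right)\mathbb{E}[S_t]+N\mu$; iterating from $S_0=0$ (or any finite initial condition) produces the uniform bound $\mathbb{E}[S_t]\le N^2\mu$ for all $t$, so $\sup_t\mathbb{E}[S_t]<\infty$. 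Since $0\le B_t^{(n)}\le S_t$, Markov's inequality gives $\mathbb{P}\left\{B_t^{(n)}\ge K\right\}\le N^2\mu/K$ uniformly in $t$; letting $K\to\infty$ shows the family $\left\{B_t^{(n)}\right\}_t$ is tight, i.e., the battery level cannot drift off to infinity, which is exactly the asserted stability $\lim_{t\rightarrow\infty}\mathbb{P}\left\{B_t^{(n)}=\infty\right\}=0$ (equivalently, any limiting law of $B_t^{(n)}$ is a proper distribution). In fact the displayed inequality is a genuine geometric-drift condition for the Lyapunov function $\sum_n b_n$, so together with irreducibility and aperiodicity of the chain it even gives positive recurrence and a unique stationary distribution with finite mean, which is the notion of stability one wants in the rest of the subsection. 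Note that only $\mu<\infty$ was used, consistent with the remark that the Bernoulli assumption is inessential here.

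I expect the only real subtlety to be bookkeeping rather than depth: getting the full-duplex timing right in the state update, and recognizing that for every \emph{finite} $t$ the probability in the statement is trivially zero, so that the actual content is the uniform-in-$t$ moment bound (the constant $N^2\mu$ is far from tight but suffices). If one later needs finer control of $M_t$ itself, the same drift computation can be run on the sorted battery vector $a_1^{(t)}\ge\cdots\ge a_N^{(t)}$, using the percolation-type inequality $a_k^{(t+1)}\le a_{k+1}^{(t)}+1$ and a Lyapunov function $\sum_{k}\rho^{k}a_k^{(t)}$ with $\rho>1$, but that refinement is not required for this lemma.
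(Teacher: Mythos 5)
Your proof is correct but follows a genuinely different route from the paper's. The paper argues by contradiction: if $B_t^{(1)}\to\infty$, then transmitter $1$ must eventually never be the maximum, which forces some other transmitter's battery to diverge, and cascading this implication through all $N$ transmitters shows that every transmitter would have to save forever --- impossible since the greedy scheduler drains one battery per slot. Your argument instead runs a Foster--Lyapunov drift computation on the total energy $S_t=\sum_n B_t^{(n)}$, using the exact recursion $S_{t+1}=S_t-M_t+\sum_n E_t^{(n)}$ and the elementary inequality $M_t\ge S_t/N$ (max at least the average, which is precisely where greediness enters) to get $\mathbb{E}[S_{t+1}]\le(1-\tfrac1N)\mathbb{E}[S_t]+N\mu$ and hence the uniform bound $\sup_t\mathbb{E}[S_t]\le N^2\mu$, from which tightness of $\{B_t^{(n)}\}_t$ follows by Markov's inequality. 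Your approach buys more: it is quantitative (an explicit uniform first-moment bound), it cleanly sidesteps the somewhat informal event manipulations in the paper's chain of ``equivalences'' (the event $\{B_t^{(n)}=\infty\}$ is really a statement about escaping mass, which tightness addresses directly), and the geometric drift is exactly what one would want to justify the existence of the stationary law used later in the section. The paper's argument, in exchange, is shorter, needs no moment computation, and makes transparent the combinatorial reason stability holds (someone transmits every slot, so not everyone can hoard); it also visibly extends to arrivals that are i.i.d.\ in time but not across users, as the paper remarks. Your side remark that the drift condition ``even gives positive recurrence'' would additionally require checking irreducibility of the vector chain under the deterministic argmax scheduling, which is not entirely immediate, but that claim is not needed for the lemma itself.
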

The proof is given in Appendix A. Note that this lemma also holds for the case when EH rates are only i.i.d. across time, but not i.i.d. across different transmitters.

Following Lemma \ref{stable}, we have the next proposition.
\begin{Proposition}\label{probenergy}
When $\left\{E_t^{(n)}\right\}$ are i.i.d. across different transmitters and over time, under the greedy scheduling policy $\{M_t\}$, there is
\begin{equation}
  \lim_{t\rightarrow\infty}\mathbb{P}\left\{B_{t}^{(n)}=M_t\right\}=\frac{1}{N},
\end{equation}
for any $n\in\{1,2,\ldots,N\}$.
\end{Proposition}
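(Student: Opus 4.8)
\noindent\textit{Proof approach.}\quad The plan is to obtain the identity from an exchangeability argument rather than from any explicit description of the joint law of $\big(B_t^{(1)},\dots,B_t^{(N)}\big)$. Since the energy arrivals $\big\{E_t^{(n)}\big\}$ are i.i.d.\ across the users and the greedy rule treats the users symmetrically, with a symmetric initialization (say $B_1^{(n)}\equiv 0$) and a symmetric tie‑breaking convention the vector $\big(B_t^{(1)},\dots,B_t^{(N)}\big)$ is exchangeable for \emph{every} $t$: its law is invariant under permutations of the user index. Consequently the scalar $\mathbb{P}\big\{B_t^{(n)}=M_t\big\}$ is the same for all $n$. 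Lemma~\ref{stable} plays a supporting role, guaranteeing that $M_t$ (hence the served power and the rate $R_t^{gr}(N)$) is a.s.\ finite, so that the event in question is well posed and the limiting law is a proper exchangeable distribution on $\mathbb{R}_+^{N}$.

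The core step is a counting argument. Introduce the service indicators $A_t^{(n)}=\mathbf{1}\{\text{user }n\text{ is scheduled in slot }t\}$. Exactly one user is served per slot, so $\sum_{n=1}^{N}A_t^{(n)}=1$ surely; taking expectations and using exchangeability once more gives $\mathbb{P}\big\{A_t^{(n)}=1\big\}=1/N$ for every $n$ and every $t$ --- already an exact identity, requiring no limit. Since the scheduler always serves a user attaining the running maximum, $\big\{A_t^{(n)}=1\big\}\subseteq\big\{B_t^{(n)}=M_t\big\}$, and these two events coincide whenever the maximum is attained by a single user; this already yields $\mathbb{P}\big\{B_t^{(n)}=M_t\big\}\ge 1/N$.

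The remaining and most delicate point is to pass from ``the served user'' to the event $\big\{B_t^{(n)}=M_t\big\}$, i.e.\ to show that ties at the top are immaterial, and this is where I expect the real work to be. For any arrival law possessing a density the maximum is almost surely unique, the two events agree for every $t$, and the stated identity holds with exact equality --- which is precisely why the argument is not confined to the Bernoulli model. In the Bernoulli (lattice) case one reads $\big\{B_t^{(n)}=M_t\big\}$ as ``user $n$ is the one served'' (equivalently, $1/N$ is the long‑run fraction of slots in which $n$ transmits), so the identity is again exact; if one instead insisted on the set‑theoretic reading, the surplus term $\mathbb{E}\big[\#\{m:B_t^{(m)}=M_t\}\big]-1$ is exactly the probability of a tie at the maximum, and bounding it --- e.g.\ by tracking the pairwise gaps $B_t^{(i)}-B_t^{(j)}$ together with the fact that the served user is drained to an $O(1)$ level while the others keep accumulating --- would be the only genuinely technical ingredient. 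In all cases, combining the three steps gives $\lim_{t\to\infty}\mathbb{P}\big\{B_t^{(n)}=M_t\big\}=1/N$.
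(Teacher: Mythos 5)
Your proof takes essentially the same route as the paper's, which is a two-sentence symmetry argument: Lemma~\ref{stable} guarantees each transmitter is served with nonzero probability, and i.i.d.\ arrivals plus the symmetric greedy rule then force the probability to be $1/N$ by exchangeability. What you add is worth noting. First, your counting step via the service indicators $A_t^{(n)}$ with $\sum_n A_t^{(n)}=1$ sharpens the conclusion to an exact identity $\mathbb{P}\{A_t^{(n)}=1\}=1/N$ for \emph{every} $t$ (given symmetric initialization and tie-breaking), not merely in the limit, and it does not actually need Lemma~\ref{stable} for that identity. Second, you correctly flag the one genuine subtlety that the paper's proof silently skips: for the lattice (Bernoulli) arrival model the maximum can be attained by several users with positive probability, so in the literal set-theoretic reading $\sum_n\mathbb{P}\{B_t^{(n)}=M_t\}=\mathbb{E}[\#\{m:B_t^{(m)}=M_t\}]\ge 1$ with strict inequality possible, and the stated probability would then exceed $1/N$. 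The statement is exact only under the reading ``user $n$ is the one scheduled,'' which is evidently what the paper intends (Remark~\ref{remarkdis} immediately converts it into a geometric waiting time), and under that reading your argument is complete. You leave the tie-probability bound for the literal reading as a sketch, but since the paper itself neither notices nor needs it, this is not a gap relative to what is being proved.
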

\begin{proof}
  Since the energy levels of all transmitters are stable as $t\rightarrow\infty$ by Lemma \ref{stable}, it follows that each transmitter could be chosen to transmit with non-zero probability. Also, given that EH rates are i.i.d. across different transmitters and over time, we obtain by symmetry that $\lim_{t\rightarrow\infty}\mathbb{P}\left\{B_{t}^{(n)}=M_t\right\}=\frac{1}{N}$ for any $1\le n\le N$.
\end{proof}
\begin{Remark}\label{remarkdis}
  This also implies that the stationary probability that a transmitter achieves the highest energy level among all transmitters is $1/N$. Then, the waiting time for a transmitter to fulfil a transmission satisfies a geometric distribution with parameter $1/N$.
\end{Remark}

In the following, we only keep the transmitter index $n$ when it is necessary for the presentation; otherwise we remove it since all transmitters are identical to our interests. Based on Proposition \ref{probenergy} and Remark \ref{remarkdis}, we obtain the distribution of energy levels at an arbitrary transmitter, which is given as
\begin{equation}
  B_t\overset{d}{\rightarrow} B=\left\{
     \begin{array}{ll}
       E_1, & \hbox{$\frac{1}{N}$;} \\
       E_1+E_2, & \hbox{$\frac{1}{N}\frac{N-1}{N}$;} \\
       E_1+E_2+E_3, & \hbox{$\frac{1}{N}\left(\frac{N-1}{N}\right)^2$;} \\
       \cdots, & \hbox{$\cdots$,}
     \end{array}
   \right.
\end{equation}
as $t\rightarrow\infty$, where the notation $\overset{d}{\rightarrow}$ denotes the convergence in distribution. In other words, we have
\begin{equation}
  B=\sum_{i=1}^{S}E_i,\label{energyleveldis}
\end{equation}
where $S\sim Geo(\frac{1}{N})$. Then, we obtain
\begin{equation}
  M_t\overset{d}{\rightarrow}M=\max_{1\le n\le N}\left\{B^{(n)}\right\}~\hbox{as}~t\rightarrow\infty,
\end{equation}
in which $B^{(n)}$ is from (\ref{energyleveldis}) for transmitter $n$. Next, we first analyze the asymptotic behavior of $M$ when the number of transmitters gets large; and then consider the scaling of the throughput.

\subsubsection{Scaling of energy level}
It is necessary to discover how the energy level $B$ behaves as $N\rightarrow\infty$; then we move on to $M$. In the next lemma, we present the strong law of large numbers (SLLN) for the random sum $B$.

\begin{Lemma}\label{lemmaslln}
Given $\mu=\mathbb{E}[E]<\infty$, the stationary energy level $B$ satisfies
\begin{align}\label{slln}
  \frac{B-\mu S}{S}\overset{N\rightarrow\infty}{\longrightarrow}0~~\hbox{a.s.,}
\end{align}
where $S\sim Geo(\frac{1}{N})$.
\end{Lemma}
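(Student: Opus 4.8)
The plan is to read $(B-\mu S)/S$ as the centered empirical mean of the i.i.d.\ energy arrivals stopped at the random index $S$, and then to combine Kolmogorov's strong law of large numbers with the fact that $S\to\infty$ almost surely as $N\to\infty$. Writing $\bar E_n=\frac1n\sum_{i=1}^n E_i$ for $n\ge1$ and using $B=\sum_{i=1}^{S}E_i$ with $S\ge1$, one has the algebraic identity $\frac{B-\mu S}{S}=\bar E_S-\mu$, so it suffices to show $\bar E_S\to\mu$ a.s.\ as $N\to\infty$.

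To give ``$N\to\infty$ a.s.'' a precise meaning I would put the whole family on a single probability space via the standard quantile (monotone) coupling: fix one i.i.d.\ sequence $\{E_i\}_{i\ge1}$ together with an independent $U\sim\mathrm{Unif}(0,1)$, and for each $N$ set $S=S(N):=\min\{k\ge1:\,1-(1-\tfrac1N)^k\ge U\}$, which has law $Geo(\tfrac1N)$ and is independent of $\{E_i\}$. Since the family $Geo(\tfrac1N)$ is stochastically increasing in $N$, this coupling makes $N\mapsto S(N)$ nondecreasing pointwise; moreover $\mathbb{P}\{S(N)\le k\}=1-(1-\tfrac1N)^k\to0$ for every fixed $k$, so by monotone convergence the (monotone) limit $\lim_{N\to\infty}S(N)$ equals $+\infty$ almost surely.

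By Kolmogorov's SLLN for i.i.d.\ summands --- which needs only $\mathbb{E}|E|=\mu<\infty$, as assumed (the finite-variance hypothesis used elsewhere in this subsection is not required here) --- there is an event $\Omega_0$ of probability one on which $\bar E_n\to\mu$ as $n\to\infty$. On the intersection of $\Omega_0$ with the probability-one event $\{S(N)\uparrow\infty\}$, a routine $\varepsilon$-argument applies: given $\varepsilon>0$, pick $n_0$ with $|\bar E_n-\mu|<\varepsilon$ for all $n\ge n_0$, then $N_0$ with $S(N)\ge n_0$ for all $N\ge N_0$, whence $|\bar E_{S(N)}-\mu|<\varepsilon$ for $N\ge N_0$. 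Thus $\bar E_{S(N)}\to\mu$, i.e.\ $\frac{B-\mu S}{S}=\bar E_{S(N)}-\mu\to0$ a.s., which is exactly (\ref{slln}).

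I expect the only subtle point to be the interpretation embodied in the second paragraph: ``almost sure convergence as $N\to\infty$'' must be read on a common probability space, and it is the quantile coupling that upgrades the easy statement ``$S(N)\to\infty$ in probability'' to the monotone almost-sure divergence needed to pass to the random index. Once that is set up, the randomly-indexed law of large numbers is entirely elementary --- it is just the fact that $a_{k_N}\to a$ whenever $a_n\to a$ and $k_N\to\infty$ --- and no detail of the Bernoulli model (or of any particular arrival distribution) enters beyond finiteness of the mean.
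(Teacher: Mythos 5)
Your proof is correct, but it takes a genuinely different route from the paper's. The paper fixes $\epsilon>0$, sets $X_i=E_i-\mu$, and asserts that the SLLN for $\{X_i\}$ yields the summability $\sum_{k}\mathbb{P}\{|\sum_{i=1}^{k}X_i|>k\epsilon\}<\infty$; it then defines $A_k=\{|\sum_{i=1}^{S}X_i|>S\epsilon,\ S=k\}$, bounds $\sum_k\mathbb{P}\{A_k\}$ by the previous sum after conditioning on $S=k$ (using independence of $S$ and the $E_i$), and concludes $\mathbb{P}\{A_k\ \hbox{i.o.}\}=0$ by Borel--Cantelli. You instead realize the whole family $\{S(N)\}_N$ on one probability space via the quantile coupling, observe that $S(N)\uparrow\infty$ a.s., and pass the ordinary Kolmogorov SLLN through the random index. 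Your argument buys two things: it makes precise what ``a.s.\ convergence as $N\to\infty$'' means for a quantity whose law changes with $N$ (a point the paper leaves implicit), and it genuinely needs only $\mathbb{E}|E|<\infty$, whereas the paper's summability step is really complete convergence in the sense of Hsu--Robbins--Erd\H{o}s, which is equivalent to a finite second moment and does not follow from the SLLN alone (harmless here, since the Bernoulli arrivals have finite variance, but your route avoids the issue entirely). The paper's approach, for its part, avoids any explicit coupling by phrasing the bad event directly as $\{A_k\ \hbox{i.o.}\}$ in the value of $S$. Both arguments establish the lemma; yours is the more careful of the two.
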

The proof is given in Appendix B. Lemma \ref{lemmaslln} also implies that $\mathbb{E}[B]=N\mu$. Next, we present the central limit theorem for the random sum $B$.
\begin{Proposition}\label{propclt}
Given $\{E_t\}$ are i.i.d. and $\mu=\mathbb{E}[E_t]<\infty$, $B$ satisfies
\begin{align}
  \frac{B-\mu S}{\sigma\sqrt{S}}\overset{d}{\rightarrow}X,
\end{align}
as $N\rightarrow\infty$, where $X\sim \mathcal{N}(0,1)$.
\end{Proposition}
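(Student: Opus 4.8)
The plan is to work from the mixture representation in (\ref{energyleveldis}): $B$ has the law of $\sum_{i=1}^{S}E_i$, where $S\sim Geo(1/N)$ is independent of the i.i.d.\ sequence $\{E_i\}$ with $\mathbb{E}[E_i]=\mu$ and $\mathrm{Var}(E_i)=\sigma^2$ (finite, as is implicit in normalizing by $\sigma\sqrt{S}$). Then $B-\mu S=\sum_{i=1}^{S}(E_i-\mu)$ is a geometrically-indexed random sum of i.i.d.\ centered summands. The engine of the proof is the observation that $S\to\infty$ in probability as $N\to\infty$: for every fixed $K$, $\mathbb{P}\{S\le K\}=1-(1-1/N)^{K}\to 0$. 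Conditioned on $\{S=k\}$, the classical CLT for $\{E_i\}$ already gives $(B-\mu S)/(\sigma\sqrt{S})\to\mathcal{N}(0,1)$ as $k\to\infty$; what remains is to average this over the law of $S$ in a way that is uniform enough, which I will carry out with characteristic functions (equivalently, one could invoke the Anscombe--R\'enyi random-index CLT).

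Concretely, let $\varphi(u)=\mathbb{E}\!\left[e^{iu(E_1-\mu)}\right]$ and $\phi_k(t)=\big(\varphi(t/(\sigma\sqrt{k}))\big)^{k}$. Using independence of $S$ and $\{E_i\}$ and conditioning on $S$,
\[
  \psi_N(t):=\mathbb{E}\!\left[\exp\!\left(it\,\frac{B-\mu S}{\sigma\sqrt{S}}\right)\right]=\sum_{k=1}^{\infty}\mathbb{P}\{S=k\}\,\phi_k(t).
\]
Since $\mathrm{Var}(E_1)=\sigma^2<\infty$, the second-order expansion $\varphi(u)=1-\tfrac12\sigma^2u^2+o(u^2)$ gives, for each fixed $t$, $\phi_k(t)=\big(1-\tfrac{t^2}{2k}+o(1/k)\big)^{k}\to e^{-t^2/2}$ as $k\to\infty$, and also $|\phi_k(t)|\le 1$ for all $k$.

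To pass to the limit in the mixture, fix $t$ and $\epsilon>0$ and choose $K$ so that $|\phi_k(t)-e^{-t^2/2}|\le\epsilon$ for all $k>K$. Bounding the low-index terms by $2$ and the high-index terms by $\epsilon$,
\begin{align*}
  \big|\psi_N(t)-e^{-t^2/2}\big|
  &\le \sum_{k\le K}\mathbb{P}\{S=k\}\cdot 2+\sum_{k>K}\mathbb{P}\{S=k\}\cdot\epsilon\\
  &\le 2\,\mathbb{P}\{S\le K\}+\epsilon.
\end{align*}
Letting $N\to\infty$ sends $\mathbb{P}\{S\le K\}\to 0$, so $\limsup_{N\to\infty}\big|\psi_N(t)-e^{-t^2/2}\big|\le\epsilon$; since $\epsilon$ is arbitrary, $\psi_N(t)\to e^{-t^2/2}$ for every $t$, which is the characteristic function of $\mathcal{N}(0,1)$. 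L\'evy's continuity theorem then yields $(B-\mu S)/(\sigma\sqrt{S})\overset{d}{\rightarrow}X\sim\mathcal{N}(0,1)$.

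The only delicate point will be the uniformity step above: unlike the textbook random-index CLT, the index $S$ here depends on the same parameter $N$ that is driven to infinity, so the conditional CLT cannot be applied verbatim; the split at $K$ handles this, and it works only because the geometric law places vanishing mass on any fixed bounded set of indices. Two minor points also deserve care: the independence of $S$ and $\{E_i\}$ should be read off from the mixture in (\ref{energyleveldis}), and finite variance --- not merely the finite mean stated in the Proposition --- is genuinely needed, both for the normalization by $\sigma\sqrt{S}$ to be meaningful and for the expansion giving $\phi_k(t)\to e^{-t^2/2}$.
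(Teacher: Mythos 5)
Your proof is correct and follows essentially the same route as the paper's Appendix C: condition on $S$, expand the characteristic function to second order so that $\phi_k(t)\to e^{-t^2/2}$, split the mixture at a fixed index $K$, and use $\mathbb{P}\{S\le K\}\to 0$ before invoking L\'evy's continuity theorem. Your version is in fact the more carefully written of the two (explicit $\epsilon$--$K$ bookkeeping, the $i$ in the characteristic function, and the remark that finite variance is genuinely needed), but it is the same argument.
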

The proof is given in Appendix C.

Based on Proposition \ref{propclt}, we obtain that
\begin{align}
\frac{M-\mu S}{\sigma\sqrt{S}}&=\max_{1\le n\le N}\left\{\frac{B^{(n)}-\mu S}{\sigma\sqrt{S}}\right\}\nonumber\\
&\overset{d}{\rightarrow}\max_{1\le n\le N}\left\{X_n\right\},~\hbox{as}~N\rightarrow\infty,\nonumber
\end{align}
where $X_n\sim \mathcal{N}(0,1)$. Moreover, we can further approximate the distribution of $\max_{1\le n\le N}\left\{X_n\right\}$ according to the next lemma \cite{BovierExtreme,PEExtremal}.
\begin{Lemma}\label{extremenormal}
  If $X_n\sim \mathcal{N}(0,1)$ for $1\le n\le N$, the distribution of $Z_N=\max\{X_1,X_2,\ldots,X_N\}$ satisfies
  \begin{equation}
    \mathbb{P}\left\{a_N(Z_N-b_N)<x\right\}\rightarrow \exp(-e^{-x})
  \end{equation}
  as $N\rightarrow\infty$, where $a_N$ and $b_N$ are normalizing variables, which are given as
  \begin{align}
    a_N &= \sqrt{2\ln N}\nonumber\\
    b_N&=\sqrt{2\ln N}-\frac{\ln\ln N+\ln4\pi}{2\sqrt{2\ln N}}.\nonumber
  \end{align}
\end{Lemma}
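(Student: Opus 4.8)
The plan is to establish the classical Fisher–Tippett–Gnedenko-type extreme value result for i.i.d. standard Gaussians, showing that the maximum, after the stated affine renormalization, converges to the Gumbel distribution. The standard route goes through the tail behavior of the Gaussian and a Poisson-approximation argument for the number of exceedances.

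First I would recall the precise Gaussian tail estimate: for $\Phi$ the standard normal CDF, $1-\Phi(x) = \frac{\phi(x)}{x}\bigl(1+o(1)\bigr)$ as $x\to\infty$, where $\phi(x)=\frac{1}{\sqrt{2\pi}}e^{-x^2/2}$ (this is Mills' ratio, obtained by integration by parts). Next, fix $x\in\mathbb{R}$ and set $u_N = b_N + x/a_N$ with $a_N,b_N$ as in the statement. The goal is to show $N\bigl(1-\Phi(u_N)\bigr)\to e^{-x}$. I would plug $u_N$ into the tail estimate and carefully expand $u_N^2/2$: since $b_N = \sqrt{2\ln N} - \frac{\ln\ln N + \ln 4\pi}{2\sqrt{2\ln N}}$, one gets $b_N^2 = 2\ln N - (\ln\ln N + \ln 4\pi) + o(1)$, so $e^{-b_N^2/2} = \frac{\sqrt{4\pi\ln N}}{N}(1+o(1))$, and the cross term $2b_N\cdot(x/a_N) \to 2x$ while the $(x/a_N)^2$ term vanishes. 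Combining with the $\frac{1}{\sqrt{2\pi}}$ and the $1/u_N \sim 1/\sqrt{2\ln N}$ factors, the powers of $\ln N$ cancel exactly (this cancellation is the whole point of the particular choice of $b_N$), leaving $N(1-\Phi(u_N)) \to e^{-x}$.

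Then I would invoke independence: $\mathbb{P}\{Z_N \le u_N\} = \Phi(u_N)^N = \bigl(1 - \tfrac{1}{N}\cdot N(1-\Phi(u_N))\bigr)^N$. Since $N(1-\Phi(u_N)) \to e^{-x}$, the elementary limit $(1 - c_N/N)^N \to e^{-c}$ whenever $c_N \to c$ gives $\mathbb{P}\{Z_N \le u_N\} \to \exp(-e^{-x})$. Finally, rewriting $\{Z_N \le u_N\}$ as $\{a_N(Z_N - b_N) \le x\}$ and noting that the limiting Gumbel CDF $\exp(-e^{-x})$ is continuous, the convergence holds for all $x$, which is the claim.

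The main obstacle is the bookkeeping in the second step: one must track the $o(1)$ terms in $b_N^2$ carefully enough to see that all the $\ln N$ and $\ln\ln N$ contributions cancel and that the surviving constant is exactly $e^{-x}$, not $C e^{-x}$ for some stray constant. In particular the $\ln 4\pi$ inside $b_N$ is tuned precisely to kill the $\frac{1}{\sqrt{2\pi}}$ from $\phi$ together with the $\frac{1}{\sqrt{2\ln N}}$ from Mills' ratio; getting the factor of $2$ and the $\pi$ versus $4\pi$ right is where errors creep in. Everything else — the Mills ratio asymptotics and the $(1-c_N/N)^N$ limit — is routine. Alternatively, since this is a textbook fact, one may simply cite \cite{BovierExtreme,PEExtremal} and omit the computation, which is presumably what the paper does.
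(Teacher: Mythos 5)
Your argument is correct, and it is the standard Fisher--Tippett--Gnedenko derivation for the Gaussian maximum. The paper itself supplies no proof of this lemma --- it is stated as a known result with citations to \cite{BovierExtreme,PEExtremal} --- so there is no in-paper argument to compare against; your writeup is essentially what those references contain. The key computation checks out: with $u_N=b_N+x/a_N$ one gets $u_N^2/2=\ln N-\tfrac{1}{2}(\ln\ln N+\ln 4\pi)+x+o(1)$, hence $e^{-u_N^2/2}=\frac{\sqrt{4\pi\ln N}}{N}e^{-x}(1+o(1))$, and combining with the Mills-ratio factor $\frac{1}{\sqrt{2\pi}\,u_N}\sim\frac{1}{\sqrt{2\pi}\sqrt{2\ln N}}$ gives $N(1-\Phi(u_N))\to\frac{\sqrt{4\pi}}{\sqrt{2\pi}\sqrt{2}}e^{-x}=e^{-x}$ with no stray constant, after which $\Phi(u_N)^N\to\exp(-e^{-x})$ is immediate. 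Your passing from $\le$ to the strict inequality in the statement is harmless since the Gumbel limit is continuous. The only presentational caveat is that your $(1-c_N/N)^N\to e^{-c}$ step implicitly needs $c_N$ bounded, which your convergence $c_N\to e^{-x}$ already guarantees; nothing is missing.
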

Based on Lemma \ref{extremenormal}, we obtain the following proposition.
\begin{Proposition}
The optimal transmit power $M$ satisfies
\begin{equation}
  a_N\left(\frac{M-\mu S}{\sigma\sqrt{S}}-b_N\right)\overset{d}{\rightarrow}Y,~\hbox{as}~N\rightarrow\infty,
\end{equation}
where $a_N$ and $b_N$ are given in Lemma \ref{extremenormal}, and the CDF of $Y$ is $\exp(-e^{-x})$ for $x\in(-\infty,+\infty)$.
\end{Proposition}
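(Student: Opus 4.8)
The plan is to chain the central limit theorem for the stationary battery level (Proposition \ref{propclt}) together with the classical extreme--value limit for Gaussian maxima (Lemma \ref{extremenormal}), following the two--line sketch already displayed in the paragraph preceding the statement. Proposition \ref{propclt} identifies the large-$N$ law of the normalized battery level $\frac{B-\mu S}{\sigma\sqrt S}$ as standard normal, and the transmitters are independent, so the only real work is to push this Gaussian approximation through the $\max$ over the $N$ transmitters and then through the affine normalization $z\mapsto a_N(z-b_N)$.

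Concretely I would proceed as follows. First, apply Proposition \ref{propclt} to each transmitter to get $\frac{B^{(n)}-\mu S}{\sigma\sqrt S}\overset{d}{\rightarrow}X_n\sim\mathcal N(0,1)$ as $N\rightarrow\infty$. Second, use that $\{E_t^{(n)}\}$ are i.i.d.\ across transmitters: the battery levels $B^{(1)},\dots,B^{(N)}$ are independent, so the vector of normalized battery levels converges jointly to a vector of i.i.d.\ standard normals, and applying the $\max$ functional gives $\frac{M-\mu S}{\sigma\sqrt S}=\max_{1\le n\le N}\frac{B^{(n)}-\mu S}{\sigma\sqrt S}\overset{d}{\rightarrow}\max_{1\le n\le N}X_n=:Z_N$, which is exactly the displayed intermediate claim. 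Third, invoke Lemma \ref{extremenormal} for $Z_N$ with $a_N=\sqrt{2\ln N}$ and $b_N=\sqrt{2\ln N}-\frac{\ln\ln N+\ln4\pi}{2\sqrt{2\ln N}}$, giving $a_N(Z_N-b_N)\overset{d}{\rightarrow}Y$ with CDF $\exp(-e^{-x})$. Fourth, compose: applying the affine map $z\mapsto a_N(z-b_N)$ to the convergence in the second step and matching against the third step yields $a_N\!\left(\frac{M-\mu S}{\sigma\sqrt S}-b_N\right)\overset{d}{\rightarrow}Y$.

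The step that actually carries the weight, and the one I expect to be the main obstacle, is the interchange hidden in the second and fourth steps: the Gaussian approximation of each $B^{(n)}$ and the extreme--value limit are both taken as $N\rightarrow\infty$, so this is a triangular--array statement rather than a fixed-$N$ continuous--mapping argument. Writing $G_N$ for the CDF of $\frac{B-\mu S}{\sigma\sqrt S}$, what must be shown is $G_N\big(b_N+x/a_N\big)^{N}\rightarrow\exp(-e^{-x})$; since Lemma \ref{extremenormal} already supplies $\Phi\big(b_N+x/a_N\big)^{N}\rightarrow\exp(-e^{-x})$ (with $\Phi$ the standard normal CDF), it suffices that $N\,\big|G_N(u_N)-\Phi(u_N)\big|\rightarrow0$ at the threshold $u_N=b_N+x/a_N\sim\sqrt{2\ln N}$. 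A uniform Berry--Esseen bound on the random sum does \emph{not} suffice here, because $u_N$ lies $\Theta(\sqrt{\ln N})$ standard deviations into the tail, so one needs control of the \emph{relative} CLT error in this moderate--deviations window. Under the Bernoulli arrival model the increments are bounded, so a Cram\'er--type moderate--deviation estimate gives $1-G_N(u_N)\sim 1-\Phi(u_N)$ in this range; conditioning on $S$ and using that $S\rightarrow\infty$ in probability (indeed $S/N$ converges to an $\mathrm{Exp}(1)$ law, so the number of summands is of order $N\gg(\ln N)^{3}$), then taking the expectation over $S$ with a small-$\epsilon$ truncation onto $\{S\ge\epsilon N\}$, closes the gap. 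I would present the argument in that setting and note that it extends verbatim to any arrival law satisfying a Cram\'er--type condition.
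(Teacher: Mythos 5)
Your proof follows exactly the same route as the paper: the paper's entire proof is the single sentence that the result ``can be directly proved by using Proposition \ref{propclt} and Lemma \ref{extremenormal},'' i.e., precisely the chaining you describe in your first three steps. Where you go beyond the paper is in your final paragraph: you correctly observe that the composition is \emph{not} a direct continuous-mapping argument, because the Gaussian approximation of each $\frac{B^{(n)}-\mu S}{\sigma\sqrt S}$ and the extreme-value limit are both indexed by the same $N$, so one needs $N\bigl|G_N(u_N)-\Phi(u_N)\bigr|\rightarrow 0$ at a threshold $u_N$ that sits $\Theta(\sqrt{\ln N})$ deviations into the tail, which a plain CLT (or even a uniform Berry--Esseen bound) does not deliver. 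The paper simply elides this triangular-array issue, and your proposed repair via a Cram\'er-type moderate-deviation estimate for the bounded Bernoulli increments, conditioned on $S$ and truncated onto $\{S\ge\epsilon N\}$, is the right kind of argument to close it. One further point that neither you nor the paper addresses, but that you inherit from the paper's notation: the identity $\frac{M-\mu S}{\sigma\sqrt S}=\max_{1\le n\le N}\frac{B^{(n)}-\mu S}{\sigma\sqrt S}$ presumes a single waiting time $S$ common to all transmitters, whereas each $B^{(n)}=\sum_{i=1}^{S^{(n)}}E_i^{(n)}$ carries its own independent $S^{(n)}\sim Geo(1/N)$; with distinct normalizers the max of the normalized variables is no longer an affine function of $M$, so the statement itself needs either a reinterpretation (e.g., conditioning on the $S^{(n)}$ or replacing $\sqrt S$ by a deterministic normalizer such as $\sqrt N$) before any version of the argument can go through.
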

\begin{proof}
  It can be directly proved by using Proposition \ref{propclt} and Lemma \ref{extremenormal}.
\end{proof}

\subsubsection{Scaling of expected throughput}
With the results about the transmission power $M$, we are now ready to investigate how the average throughput behaves. By Jensen's inequality, an upper bound of the optimal throughput can be derived as
\begin{align}
  \mathbb{E}\left[R^{gr}(N)\right]=&\mathbb{E}\left[\lim_{t\rightarrow\infty}R_t^{gr}(N)\right]
  =\mathbb{E}\left[\log\left(1+M\right)\right]\nonumber\\
  \le&\log\left(1+\mathbb{E}\left[M\right]\right)=\widehat{R}^{gr}(N).
\end{align}
Note that the upper bound $\widehat{R}^{gr}(N)$ could be very tight when $\mathbb{E}\left[M\right]$ is large, and thus we only focus on the behavior of $\widehat{R}^{gr}(N)$. The next lemma \cite{JustinGuassian} is used to bound the mean of $M$.
\begin{Lemma}\label{upbound}
  If $X_n\sim \mathcal{N}(0,1)$ for $1\le n\le N$, then the mean of $Z_N=\max\{X_1,X_2,\ldots,X_N\}$ satisfies
  \begin{align}
    \mathbb{E}[Z_N]\le\sqrt{2\ln N}+o(1)
  \end{align}
  for large $N$, where $o(1)$ denotes the function such that $\lim_{N\rightarrow\infty}o(1)<\epsilon$ for any $\epsilon>0$.
\end{Lemma}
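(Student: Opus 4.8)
The plan is to prove the sharper, fully non-asymptotic inequality $\mathbb{E}[Z_N]\le\sqrt{2\ln N}$, from which the stated bound follows with $o(1)$ equal to zero (keeping the $o(1)$ only for consistency with the cited reference). The argument is the standard exponential-moment (Chernoff-type) bound, and it needs only three ingredients: Jensen's inequality, the elementary estimate $\max_{1\le n\le N}a_n\le\sum_{n=1}^N a_n$ valid for nonnegative $a_n$, and the Gaussian moment generating function $\mathbb{E}[e^{\lambda X_n}]=e^{\lambda^2/2}$.

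First I would introduce a free parameter $\lambda>0$ and apply Jensen's inequality to the convex map $x\mapsto e^{\lambda x}$:
\begin{equation}
e^{\lambda\,\mathbb{E}[Z_N]}\le\mathbb{E}\left[e^{\lambda Z_N}\right].
\end{equation}
Next, since $e^{\lambda Z_N}=\max_{1\le n\le N}e^{\lambda X_n}\le\sum_{n=1}^N e^{\lambda X_n}$, linearity of expectation together with the Gaussian moment generating function gives $\mathbb{E}[e^{\lambda Z_N}]\le N e^{\lambda^2/2}$. Combining with the previous display, taking logarithms of $e^{\lambda\,\mathbb{E}[Z_N]}\le N e^{\lambda^2/2}$, and dividing by $\lambda$ yields
\begin{equation}
\mathbb{E}[Z_N]\le\frac{\ln N}{\lambda}+\frac{\lambda}{2}.
\end{equation}
Finally I would optimize the bound over the free parameter: the right-hand side is minimized at $\lambda=\sqrt{2\ln N}$, which produces $\mathbb{E}[Z_N]\le\sqrt{2\ln N}$, and hence $\mathbb{E}[Z_N]\le\sqrt{2\ln N}+o(1)$.

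There is essentially no obstacle here: the computation is a few lines and is in fact uniform in $N$, so the qualifier ``for large $N$'' is needed only to guarantee $\ln N>0$. The one point worth flagging is that the argument never uses independence of the $X_n$ — the step $\max\le\sum$ together with linearity of expectation suffices — which is convenient given that in the application $M$ is built from limiting variables whose joint law is not examined, although in the lemma as stated the $X_n$ are assumed i.i.d. If one prefers to exhibit a genuinely nonzero $o(1)$, an alternative route is the tail-integral method: bound $\mathbb{E}[Z_N]\le\int_0^\infty\mathbb{P}\{Z_N>t\}\,dt$, split the range at $u=\sqrt{2\ln N}$, use $\mathbb{P}\{Z_N>t\}\le1$ on $[0,u]$ and the union bound together with the Gaussian tail estimate $\mathbb{P}\{X_1>t\}\le e^{-t^2/2}$ on $[u,\infty)$, and check that $\int_u^\infty N e^{-t^2/2}\,dt\le\frac{N}{u}e^{-u^2/2}=\frac{1}{\sqrt{2\ln N}}=o(1)$; this reproduces $\mathbb{E}[Z_N]\le\sqrt{2\ln N}+o(1)$ with the correction term made explicit.
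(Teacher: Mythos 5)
Your argument is correct; the paper itself gives no proof of this lemma, merely citing an external note, and your exponential-moment (Chernoff--Jensen) derivation is precisely the standard argument that the cited source uses. In fact you establish the cleaner non-asymptotic bound $\mathbb{E}[Z_N]\le\sqrt{2\ln N}$ for all $N\ge 2$, which subsumes the stated inequality, and your observation that independence is never needed (only the marginal laws enter via $\max\le\sum$) is a genuine bonus in this paper's application, where the joint law of the limiting $X_n$ is not controlled.
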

By Lemma \ref{upbound}, when $N$ is large, we have
\begin{equation}
  \mathbb{E}\left[\frac{M-\mu S}{\sigma\sqrt{S}}\right]=\mathbb{E}\left[\max_{1\le n\le N}\left\{X_n\right\}\right]\le \sqrt{2\ln N}+1.
\end{equation}
Therefore, we obtain an approximated upper bound for $\mathbb{E}[M]$, i.e., for large $N$,
\begin{align}
  \mathbb{E}[M]&\lessapprox \mu N+\sigma \mathbb{E}\left[\sqrt{S}\right]\left(\sqrt{2\ln N}+1\right)\nonumber\\
  &=\mu N+o(N).\label{gaptranspower}
\end{align}
Note that this approximation is more accurate if the variable $S$ is deterministic to be $N$. Furthermore, we could bound $\mathbb{E}[M]$ from below such that
\begin{align}
  \mu N\le\mathbb{E}[M],\nonumber
\end{align}
since $\mu N=\mathbb{E}[B]\le \mathbb{E}[M]$. Finally, it follows that
\begin{equation}
  \widehat{R}(N)=O\left(\log(\mu N)\right),
\end{equation}
where $O\left(\log(\mu N)\right)$ denotes the function such that $\lim_{N\rightarrow\infty}\frac{O\left(\log(\mu N)\right)}{\log(\mu N)}<\infty$.

Another centralized scheme considered here is the fixed TDMA, where each user transmits periodically. In this case, the transmission power is $B=\sum_{i=1}^{N}E_i$ for any user, and we have $\mathbb{E}[B]=N\mu$, which implies that the transmission rate grows on the scale of $\log(\mu N)$. Since the gap $o(N)$ of the transmission power in (\ref{gaptranspower}) grows slowly, it is expected that the throughput achieved by TDMA is almost the same as the greedy scheduling when $N\rightarrow\infty$. One of the advantages of TDMA compared to the greedy algorithm is that TDMA has less complexity since the controller does not need to track the energy level of each user. The performance of TDMA will be also numerically validated in Section \ref{discuss}.

\subsection{Distributed Access}
Suppose that the $n$-th user contents for the channel use with probability $q_n$ at the very beginning of each time slot; then the successful contention probability of the $n$-th user is
\begin{equation}
Q_n=q_n\prod_{j\neq n}(1-q_j).\label{sucessprob}
\end{equation}
Here, we assume that channel contention consumes negligible time and energy as we focus on investigating the order-wise throughput performance. If the $n$-th transmitter successfully occupies the channel, it transmits during the current time slot by using all its available energy (i.e., greedy power utilization). Under this access and power control scheme, the average throughput across the whole system is given by
\begin{align}
  R(N)&=\sum_{n=1}^Nq_n\prod_{j\neq n}(1-q_j)\mathbb{E}\left[\log\left(1+B^{(n)}\right)\right]\nonumber\\
  &\le\sum_{n=1}^NQ_n\log\left(1+\mathbb{E}\left[B^{(n)}\right]\right)\nonumber\\
  &=\widehat{R}_d(N)
\end{align}
Again, it is worth noticing that when $\mathbb{E}\left[B^{(n)}\right]$ is large, $R(N)\approx\widehat{R}_d(N)$. Then, we aim to discover the asymptotic behavior of $\widehat{R}_d(N)$.

Following the Bernoulli energy arrival model, the state transition of energy levels is depicted in Fig. \ref{Markovbattery}.
\begin{figure}
  \centering
  \includegraphics[width=3.3in]{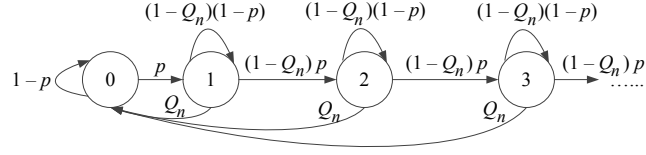}
  \caption{The transition of energy levels.}
  \label{Markovbattery}
\end{figure}
Accordingly, the transition probability matrix of the energy level is given by
\begin{align}
&W=\nonumber\\
&\left[
  \begin{array}{ccccc}
    1-p & p & 0 & \cdots   \\
    Q_n & (1-Q_n)(1-p) & (1-Q_n)p &      \\
    Q_n & 0 & (1-Q_n)(1-p) & (1-Q_n)p   \\
    \vdots &  &  & \ddots   \\
  \end{array}\label{transmatrix}
\right]
\end{align}
We can observe that this Markov chain is irreducible and aperiodic. Moreover, we obtain the stationary distribution of energy levels from the following proposition.
\begin{Proposition}\label{uniquedis}
  There exists a unique stationary distribution $\pi=[\pi_0~\pi_1~\pi_2\cdots]$, where
\begin{align}
  &\pi_0=\frac{Q_n}{p+Q_n}, \label{stationpi0}\\
  &\pi_i=\left(\frac{(1-Q_n)p}{1-(1-Q_n)(1-p)}\right)^i\frac{\pi_0}{1-Q_n},\label{stationpi}
\end{align}
for $i=1,2,\ldots$.
\end{Proposition}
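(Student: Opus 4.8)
The plan is to pin down $\pi$ from the global balance equations $\pi = \pi W$ together with the normalization $\sum_{i\ge 0}\pi_i = 1$, to check that the resulting vector is summable, and then to invoke the standard uniqueness theorem for irreducible positive recurrent Markov chains. First I would write out the balance equations dictated by the structure of $W$ displayed in (\ref{transmatrix}). The equation for state $0$ reads
\[
  \pi_0 = (1-p)\pi_0 + Q_n\sum_{i\ge 1}\pi_i = (1-p)\pi_0 + Q_n(1-\pi_0),
\]
the one for state $1$ reads
\[
  \pi_1 = p\,\pi_0 + (1-Q_n)(1-p)\pi_1 ,
\]
and for every $i\ge 2$,
\[
  \pi_i = (1-Q_n)p\,\pi_{i-1} + (1-Q_n)(1-p)\pi_i .
\]

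Next I would solve this system. The state-$0$ equation immediately gives $(p+Q_n)\pi_0 = Q_n$, i.e.\ (\ref{stationpi0}). Writing $\rho := \frac{(1-Q_n)p}{1-(1-Q_n)(1-p)}$, the state-$1$ equation yields $\pi_1 = \frac{p}{\,1-(1-Q_n)(1-p)\,}\pi_0 = \rho\,\frac{\pi_0}{1-Q_n}$, while each equation with $i\ge 2$ collapses to the first-order recursion $\pi_i = \rho\,\pi_{i-1}$. Iterating, $\pi_i = \rho^{\,i}\,\frac{\pi_0}{1-Q_n}$ for all $i\ge 1$, which is exactly (\ref{stationpi}). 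So the candidate $\pi$ is forced; what remains is to confirm it is a genuine probability vector.

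The one fact that makes this work is the elementary identity
\[
  \bigl[\,1-(1-Q_n)(1-p)\,\bigr]-(1-Q_n)p \;=\; 1-(1-Q_n) \;=\; Q_n \;>\;0,
\]
which shows at once that $0<\rho<1$, so the geometric tail converges, and that $1-\rho = \frac{Q_n}{1-(1-Q_n)(1-p)}$, hence $\frac{\rho}{1-\rho} = \frac{(1-Q_n)p}{Q_n}$. Summing the geometric series then gives
\[
  \sum_{i\ge 0}\pi_i = \pi_0 + \frac{\pi_0}{1-Q_n}\cdot\frac{\rho}{1-\rho}
  = \pi_0\Bigl(1+\frac{p}{Q_n}\Bigr) = \pi_0\,\frac{p+Q_n}{Q_n} = 1,
\]
so the normalization is automatic. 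Since a summable solution of $\pi = \pi W$ exists, the chain is positive recurrent, and together with the irreducibility and aperiodicity already noted preceding the statement, this makes $\pi$ the unique stationary distribution.

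I expect the only genuine subtlety to be the passage from ``$\pi$ solves the balance equations'' to ``$\pi$ is the unique stationary distribution'': on a countably infinite state space, irreducibility and aperiodicity alone do not guarantee existence of a stationary distribution, so the summability check is not cosmetic — it is exactly where the hypothesis $Q_n>0$ (equivalently $\rho<1$) is used, and without it the energy level would be transient with no stationary law. Everything else reduces to a routine linear recursion and a geometric sum.
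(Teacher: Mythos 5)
Your proposal is correct and follows essentially the same route as the paper: solve the balance equations $\pi=\pi W$ from the transition matrix (\ref{transmatrix}) together with the normalization to force the geometric form (\ref{stationpi0})--(\ref{stationpi}), verify $\sum_i\pi_i=1$, and conclude uniqueness from positive recurrence (the paper phrases this as $1/\pi_0<\infty$ for state $0$, you phrase it as existence of a summable invariant vector for an irreducible chain -- the same fact). Your explicit observation that $0<\rho<1$ hinges on $Q_n>0$ is a slightly more careful articulation of a point the paper leaves implicit, but it does not change the argument.
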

The proof is given in Appendix D.

Next, we analyze the scaling laws of the battery energy and the average throughput. Note that
\begin{equation}
  \lim_{N\rightarrow\infty}Q_n=0.\label{contetionsuc}
\end{equation}
Then, we compute the average energy level as
\begin{align}
  \mathbb{E}[B^{(n)}]=&\frac{p}{1-(1-Q_n)(1-p)}\frac{Q_n}{p+Q_n}\nonumber\\
  &+\sum_{i=2}^{\infty}i\left(\frac{(1-Q_n)p}{1-(1-Q_n)(1-p)}\right)^{i}\frac{\pi_0}{1-Q_n}\nonumber\\
  =&\frac{p}{Q_n+p}\left(\frac{p}{Q_n}+1-p\right)\nonumber\\
  \approx&pQ_n^{-1},
\end{align}
when $Q_n$ is small. If all users apply the same channel contention strategy, it follows that
\begin{align}
  \widehat{R}_d(N)&=\sum_{n=1}^NQ_n\log\left(1+\mathbb{E}\left[B^{(n)}\right]\right)\nonumber\\
  &=NQ_n\log\left(1+\mathbb{E}\left[B^{(n)}\right]\right)\nonumber\\
  &\approx NQ_n\log\left(pQ_n^{-1}\right)\label{rateN}
\end{align}
when $N\rightarrow\infty$. Next, we consider some specific random access strategies and discuss how the multiuser energy diversity can be exploited.

\subsubsection{ALOHA (uniform contention)}\label{unif}
When transmitters contend with probability $q_n=1/N^{\alpha}$, for $\alpha>0$, we obtain $Q_n=\frac{1}{N^{\alpha}}\left(1-\frac{1}{N^{\alpha}}\right)^{N-1}$. The next proposition provides the optimal $\alpha$ which maximizes (\ref{rateN}).
\begin{Proposition}
  Define $\alpha^*$ as
\begin{equation}
  \alpha^*=\underset{\alpha>0}{\arg\max}\widehat{R}_d(N),
\end{equation}
for large $N$. Then, there is $\alpha^*=1$, and the maximum average throughput is given as
\begin{align}
  \widehat{R}_d(N)\approx\frac{1}{e}\log\left(peN\right).\label{purealoha}
\end{align}
\end{Proposition}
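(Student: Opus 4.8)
The plan is to reduce the statement to an elementary one-variable optimization. Substituting $q_n=1/N^{\alpha}$ into the approximation (\ref{rateN}) and writing $u:=N^{1-\alpha}$, one has $NQ_n=u\,(1-N^{-\alpha})^{N-1}$ and $\log(pQ_n^{-1})=\log p+\alpha\log N-(N-1)\log(1-N^{-\alpha})$, so that maximizing $\widehat R_d(N)$ over $\alpha>0$ is, to leading order, the same as maximizing
\[
  g_N(\alpha):=NQ_n\,\log\!\left(pQ_n^{-1}\right).
\]
The first step is to record the building-block asymptotics $NQ_n\sim u\,e^{-u}$ and $\log(pQ_n^{-1})\sim\log p+\alpha\log N+u$, obtained from $-\log(1-x)=x+x^2/2+\cdots$, and — crucially — to make these uniform in $\alpha$ over compact sets; this is the delicate part, since for small $\alpha$ the argument $N^{-\alpha}$ is not small compared with $1/u$. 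I would also note here that replacing $\mathbb{E}[B^{(n)}]=\frac{p}{Q_n+p}\bigl(\frac{p}{Q_n}+1-p\bigr)$ by $pQ_n^{-1}$ and dropping the ``$1+$'' inside the logarithm only change $\widehat R_d(N)$ by lower-order terms in the regime of interest, so working with $g_N$ is legitimate.

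The core of the argument is the change of variable to $c:=Nq_n=N^{1-\alpha}$. Holding $c$ fixed and letting $N\to\infty$ gives $g_N\to\phi_N(c):=c\,e^{-c}\bigl(\log(pN)+c-\log c\bigr)$, where the only residual $N$-dependence sits in the slowly varying factor $\log(pN)$. A direct differentiation yields
\[
  \phi_N'(c)=e^{-c}(1-c)\bigl(\log(pN)+c-\log c-1\bigr),
\]
and since $\min_{c>0}(c-\log c)=1$, the last factor is at least $\log(pN)>0$ for large $N$. Hence $\phi_N$ is strictly increasing on $(0,1)$ and strictly decreasing on $(1,\infty)$, so it is maximized \emph{exactly} at $c=1$, i.e. at $\alpha=1$, with value $\phi_N(1)=e^{-1}\bigl(\log(pN)+1\bigr)=e^{-1}\log(peN)$.

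It remains to discard the regimes where $c$ is unbounded. For $\alpha\le 1-\delta$ we have $c=N^{1-\alpha}\ge N^{\delta}\to\infty$, so $c\,e^{-c}$ — hence $g_N(\alpha)$, up to the polynomial-in-$c$ and $\log N$ prefactors — is bounded by something of order $N^{2\delta}e^{-N^{\delta}}\log N\to0$. For $\alpha\ge 1+\delta$ we have $(N-1)N^{-\alpha}=N^{1-\alpha}\to0$, so $Q_n\sim N^{-\alpha}$ and $g_N(\alpha)\approx N^{1-\alpha}\bigl(\log p+\alpha\log N\bigr)$, which (for large $N$) is decreasing in $\alpha$ on $[1+\delta,\infty)$ and hence bounded by $N^{-\delta}\bigl((1+\delta)\log N+|\log p|\bigr)\to0$. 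Therefore $\sup_{|\alpha-1|\ge\delta}g_N(\alpha)\to0$ while $g_N(1)\to\infty$, so for all large $N$ any maximizer satisfies $|\alpha^*-1|<\delta$; as $\delta>0$ is arbitrary, $\alpha^*\to1$. Finally, evaluating at $\alpha=1$ gives $Q_n=\frac1N(1-\frac1N)^{N-1}$, so $NQ_n\to e^{-1}$ and $\mathbb{E}[B^{(n)}]\approx pQ_n^{-1}\to peN$, whence $\widehat R_d(N)=NQ_n\log(1+\mathbb{E}[B^{(n)}])\to\frac1e\log(peN)$, as claimed.

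I expect the main obstacle to be the uniformity in $\alpha$ (equivalently, justifying that the ``$\approx$'' in (\ref{rateN}) and in $\mathbb{E}[B^{(n)}]\approx pQ_n^{-1}$ contribute only lower-order corrections), rather than the optimization itself, which collapses cleanly thanks to the inequality $c-\log c\ge1$.
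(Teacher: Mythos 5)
Your proof is correct and reaches the same conclusion, but it takes a genuinely different route from the paper's. The paper's proof is a three-way case comparison at \emph{fixed} $\alpha$: it computes $\lim_{N\to\infty}(1-N^{-\alpha})^{N-1}$ (equal to $0$, $e^{-1}$, or $1$ according as $\alpha<1$, $\alpha=1$, $\alpha>1$), shows $\widehat R_d(N)\to 0$ in the first and third regimes, and evaluates the $\alpha=1$ case to get $\frac1e\log(peN)$; it never actually optimizes over $\alpha$. Your substitution $c=Nq_n=N^{1-\alpha}$ followed by the derivative computation $\phi_N'(c)=e^{-c}(1-c)\left(\log(pN)+c-\log c-1\right)$ and the inequality $c-\log c\ge 1$ establishes unimodality of the (approximate) objective with its peak exactly at $c=1$, which is a strictly stronger statement: it covers $N$-dependent choices of $\alpha$ (e.g.\ $\alpha=1\pm o(1)$) and genuinely justifies the ``$\arg\max$'' in the proposition rather than merely comparing three pointwise limits. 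The price is the uniformity issue you flag yourself — the expansions $NQ_n\sim ce^{-c}$ and $\log(pQ_n^{-1})\sim\log(pN)+c-\log c$ degrade as $\alpha\to 0^+$ — but your separate treatment of the regimes $\alpha\le 1-\delta$ and $\alpha\ge 1+\delta$ (where the objective is exponentially, resp.\ polynomially, small) closes that gap, and in fact subsumes the paper's entire case analysis. In short: the paper's argument is shorter and matches its own level of rigor (it works with ``$\approx$'' throughout), while yours actually proves the optimality claim; both agree on the final evaluation $Q_n\to\frac{1}{eN}$, $\mathbb{E}[B^{(n)}]\approx peN$, $\widehat R_d(N)\approx\frac1e\log(peN)$.
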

\begin{proof}
Note that we have
\begin{align}
  \lim_{N\rightarrow\infty}\left(1-\frac{1}{N^{\alpha}}\right)^{N-1}&=\lim_{N\rightarrow\infty}e^{(N-1)\log\left(1-\frac{1}{N^{\alpha}}\right)}\nonumber\\
  &\approx\lim_{N\rightarrow\infty}e^{-\frac{N-1}{N^{\alpha}}}\nonumber\\
  &=\lim_{N\rightarrow\infty}e^{-N^{1-\alpha}},\nonumber
\end{align}
where the second approximation results from $\lim_{x\rightarrow0}\frac{\log(1+x)}{x}=1$. Thus, we obtain
\begin{align}
  \lim_{N\rightarrow\infty}\left(1-\frac{1}{N^{\alpha}}\right)^{N-1}
  =\left\{
      \begin{array}{ll}
        0, & \hbox{$0<\alpha<1$;} \\
        e^{-1}, & \hbox{$\alpha=1$;} \\
        1, & \hbox{$1<\alpha$.}
      \end{array}
    \right.\nonumber
\end{align}
Next, we check $\widehat{R}_d(N)$ in (\ref{rateN}) for all possible $\alpha$.

When $0<\alpha<1$ and $N$ is large, we obtain
\begin{align}
  \widehat{R}_d(N)&\approx NQ_n\log\left(pQ_n^{-1}\right)\nonumber\\
  &=\frac{N^{1-\alpha}}{e^{N^{1-\alpha}}}\log\left(pe^{N^{1-\alpha}}\right)\rightarrow0\nonumber
\end{align}
as $N\rightarrow\infty$.

When $1<\alpha$, similar to the case $0<\alpha<1$, it can be verified that $\widehat{R}_d(N)\rightarrow0$ as $N\rightarrow\infty$.

When $\alpha=1$, we obtain $Q_n\rightarrow\frac{1}{N}\frac{1}{e}$. It follows that $\mathbb{E}[B]\approx peN$, which leads to (\ref{purealoha}). In all, the proposition is proved.
\end{proof}

\subsubsection{Energy-aware contention}\label{pureenergyaw}
Here, we consider an energy-aware contention such that the transmitter only contends for the channel use when the battery $B\ge pe\log N$, which means that the transmitter acts only when its energy level is higher than a threshold. If the energy level meets the threshold, the transmitter will contend for the channel use with probability $\frac{1}{N}$. Therefore, the overall channel contention probability for user $n$ is given by
\begin{align}
  q_n=&\frac{1}{N}\mathbb{P}\left\{B^{(n)}\ge pe\log N\right\}\nonumber\\
   =&\frac{1}{N}\frac{p}{p+Q_n}\left(\frac{(1-Q_n)p}{1-(1-Q_n)(1-p)}\right)^{pe\log N}.\nonumber
\end{align}

It is expected that the energy-aware contention strategy is strictly better than ALOHA in terms of average throughput. Note that when $N$ is large, it follows that
\begin{align}
  q_n\approx\frac{1-\epsilon}{N},\nonumber
\end{align}
where $\epsilon$ is dependent on $N$. Then, the total number of transmitters that would join channel contentions is $N(1-\epsilon)$. The successful channel contention for user $n$ is given by
\begin{align}
  Q_n=\frac{1-\epsilon}{N}\left(1-\frac{1-\epsilon}{N}\right)^{N(1-\epsilon)-1}\approx\frac{1-\epsilon}{N}e^{-(1-\epsilon)^2},\nonumber
\end{align}
since $\lim_{N\rightarrow\infty}\left(1-\frac{1-\epsilon}{N}\right)^{N(1-\epsilon)-1}=e^{-(1-\epsilon)^2}$. Therefore, we obtain
\begin{align}
  \mathbb{E}[B]\approx \frac{e^{(1-\epsilon})^2}{1-\epsilon}pN,\nonumber
\end{align}
and
\begin{align}
  \widehat{R}_d(N)\approx (1-\epsilon)e^{-(1-\epsilon)^2}\log\left(\frac{e^{(1-\epsilon)^2}}{1-\epsilon}pN\right).\label{imprenergyaware}
\end{align}
Unfortunately, it is extremely difficult to directly prove that when $N$ is large, the average throughput in (\ref{imprenergyaware}) should be strictly larger than that in (\ref{purealoha}). Instead, we numerically verify this result by testing the following two normalized functions based on (\ref{purealoha}) and (\ref{imprenergyaware}):
\begin{align}
  f_1(x)=\log(N),~f_2(x)=\frac{1}{x}\log(xN),\nonumber
\end{align}
for $x\in(0,1]$, where $N$ is a large number such that $N>\frac{1}{x}$ for all chosen $x$. In Fig. \ref{comf1f2}, we draw the values of functions $f_1$ and $f_2$ over $(0,1]$, where the minimum $x$ is set to be 0.001, and $N$ is set to be 1001. Obviously, $f_2$ stays above $f_1$ over the entire region as long as $N$ is large enough.

From the above observation, we conclude that when $N$ is large, the throughput in (\ref{imprenergyaware}) could be strictly larger than that in (\ref{purealoha}). In addition, we also numerically compare the throughput performance of different contention strategies in the next subsection, where it will be shown again that the improved energy-aware contention scheme outperforms the ALOHA uniform scheme.

\begin{figure}
  \centering
  \includegraphics[width=3.7in]{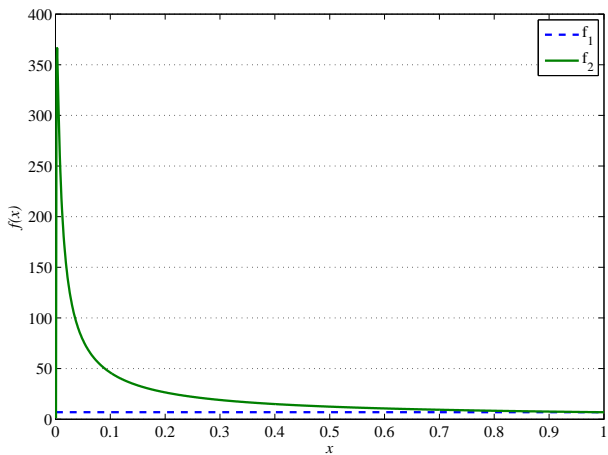}
  \caption{Comparison of functions $f_1$ and $f_2$.}
  \label{comf1f2}
\end{figure}

\subsection{Discussions}\label{discuss}
In this subsection, we provide more insights on the average throughput based on the results in the previous two subsections, and discuss where the multiuser energy diversity gain comes from.

\begin{figure}
  \centering
  \includegraphics[width=3.7in]{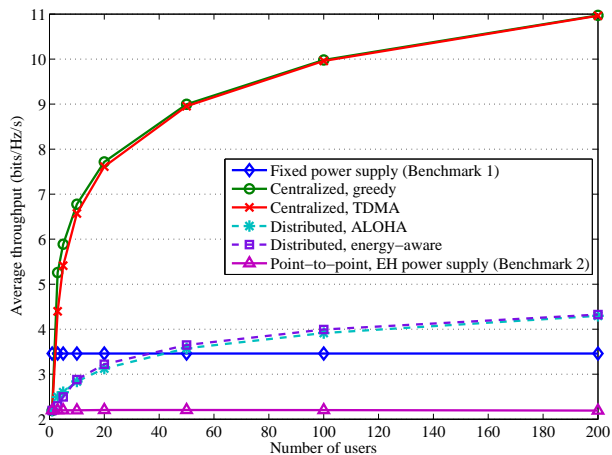}
  \caption{The average throughput in different access schemes.}
  \label{multidivgain}
\end{figure}

First, we numerically compare the average throughput under different centralized and distributed schemes in Fig. \ref{multidivgain}. Here, we set two benchmarks. The first benchmark is the throughput when each user has a fix power supply under the centralized access scheme. This is also equivalent to the point-to-point case since the throughput is always a constant over an AWGN channel given a fixed transmission power. The second benchmark is the throughput achieved by a point-to-point EH communication system over an additive Gaussian channel, where the transmitter adopts a greedy power utilization stratety\footnote{Note that for Gaussian channel, the greedy power utilization strategy is not a capacity achieving power allocation strategy. The capacity achieving power allocation strategy is discussed in \cite{SUlukus2015,RRajesh2014}, and the corresponding throughput is the same as the first benchmark.}. The second benchmark is lower than the first one due to the concavity of the throughput function and the randomness of the transmission power.

We observe that all the scheduling schemes discussed in the previous two subsections can somehow exploit the multiuser energy diversity. For the centralized schemes, the greedy scheduling can achieve better performance than TDMA, while their performances get close when $N$ is large, which agrees with our discussion in Section \ref{centra}. For the distributed schemes, the energy-aware access achieves a slightly higher throughput than ALOHA, which also validates our analysis in the previous section. In addition, we also observe that the distributed scheme has a throughput loss against centralized schemes as $N\rightarrow\infty$, which results from the channel contentions in random access schemes. This observation is similar to the case with conventional multiuser diversity in fading channels, where the ALOHA has a throughput loss $\frac{1}{e}$ compared to the centralized protocol \cite{XQin2006}.

\begin{figure}
  \centering
  \includegraphics[width=3.7in]{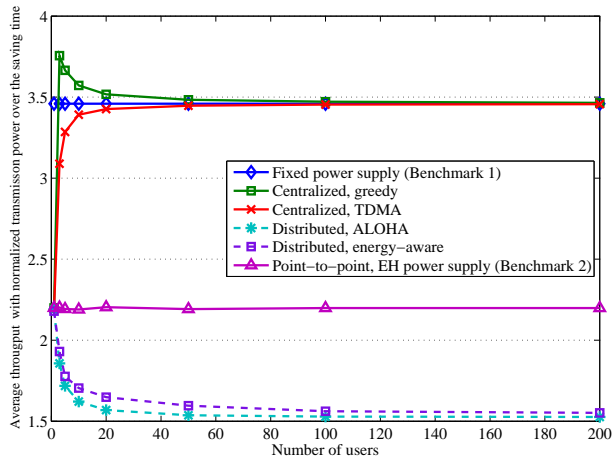}
  \caption{The average throughput with normalized transmission power in different access schemes.}
  \label{multidivgainnorma}
\end{figure}

Moreover, in Fig. \ref{multidivgainnorma}, we numerically compare the average throughput when the transmission power is normalized by the average waiting time, which is $N$. Such normalization eliminates the throughput contribution of the increase of total available energy accumulated over time. We observe that only the centralized greedy scheduling can achieve a throughput gain over both benchmarks, and TDMA only has a gain over the second benchmark while it can approach the first benchmark. This implies that compared to the second benchmark, the multiuser energy diversity gain comes from two aspects:
\begin{enumerate}
  \item The increase in total available energy accumulated over time;
  \item The improvement in effective transmission power:
\begin{itemize}
  \item for the greedy scheduling, improved from $E^{(1)}_t$ to $\max_{1\le n\le N}\frac{1}{N}B^{(n)}_t$;
  \item for TDMA, improved from $E^{(1)}_t$ to $\frac{1}{N}\sum_{t=1}^NE^{(1)}_t$.
\end{itemize}
\end{enumerate}
Note that the normalized average throughput of two centralized schemes is ``upper-bounded'' by that achieved by a fixed power supply as $N\rightarrow\infty$, which implies that multiuser diversity gain mainly comes from the power gain when $N$ is large. We also observe that the distributed schemes have a ``negative'' diversity gain when we eliminate the effect of energy accumulation. It implies that the ALOHA-based access cannot effectively explore the randomness of energy levels since the users are not coordinated well.

Next, we make some remarks on the asymptotic distributions of the transmission power under either centralized or distributed access schemes. The main result is given in the next proposition.
\begin{Proposition}\label{heavytail}
  When EH rates are i.i.d. across transmitters and over time, the transmission power has a heavy-tailed distribution when $N\rightarrow\infty$ under either the centralized optimal access scheme or the distributed access scheme.
\end{Proposition}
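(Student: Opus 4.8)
The plan is to read ``heavy-tailed distribution when $N\to\infty$'' in the only sense in which the statement can hold, since for every fixed $N$ the transmission power has a geometric-type (hence light) tail: I would show that the abscissa of convergence of the moment generating function (MGF) of the transmission power --- the supremum of $\theta$ for which $\mathbb{E}[e^{\theta\,\cdot}]<\infty$ --- tends to $0$ as $N\to\infty$, so that in the limit every exponential moment is lost and the law is heavy-tailed. The strategy is to compute this threshold explicitly in the two access models and trace it in both to a single mechanism: a geometric structure whose decay ratio is driven to $1$ because the per-slot probability of being served vanishes.

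For the centralized optimal (greedy) scheme I would start from (\ref{energyleveldis}): the stationary battery level at any transmitter is $B=\sum_{i=1}^{S}E_i$ with $S\sim Geo(\frac{1}{N})$, and the transmission power is $M=\max_{1\le n\le N}B^{(n)}$. Conditioning on $S$ and writing $\phi(\theta)=\mathbb{E}[e^{\theta E}]$, the independence of the $E_i$ and of $S$ yields $\mathbb{E}[e^{\theta B}]=\mathbb{E}[\phi(\theta)^{S}]=\frac{(1/N)\phi(\theta)}{1-(1-1/N)\phi(\theta)}$, which is finite exactly when $\phi(\theta)<\frac{N}{N-1}$. Since $\phi$ is nondecreasing, right-continuous at $0$, and $\phi(0)=1$, the threshold $\theta_N=\sup\{\theta:\phi(\theta)<\frac{N}{N-1}\}$ satisfies $\theta_N\downarrow 0$; for the Bernoulli model one even gets $\theta_N=\log(1+\frac{1}{p(N-1)})$ in closed form. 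Because $M\ge B^{(1)}$, we have $\mathbb{E}[e^{\theta M}]\ge\mathbb{E}[e^{\theta B^{(1)}}]=\infty$ for all $\theta\ge\theta_N$, so the MGF abscissa of $M$ is at most $\theta_N\to 0$, which is the claim for the centralized optimal access scheme.

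For the distributed scheme I would use Proposition \ref{uniquedis}: the stationary law places mass $\pi_0$ at level $0$ and $\pi_i=r_N^{\,i}\frac{\pi_0}{1-Q_n}$ at level $i\ge 1$, with ratio $r_N=\frac{(1-Q_n)p}{1-(1-Q_n)(1-p)}=\frac{(1-Q_n)p}{p+Q_n(1-p)}$; and since the winning transmitter empties its battery, this is precisely the law of the transmission power. Summing the geometric series gives $\mathbb{E}[e^{\theta B}]=\pi_0+\frac{\pi_0}{1-Q_n}\frac{r_N e^{\theta}}{1-r_N e^{\theta}}$, finite iff $\theta<-\log r_N$. By (\ref{contetionsuc}), $Q_n\to 0$, hence $r_N\to 1$ and the abscissa $-\log r_N\to 0$, establishing heavy-tailedness in the limit for the distributed scheme as well.

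The hard part is conceptual rather than computational. One must commit to the asymptotic reading above and argue uniformity in $N$, rather than attempt to pass to a single limiting law: normalising the power by its mean $\Theta(\mu N)$ produces an exponential limit (so it is light-tailed on that scale), and it is only the \emph{unnormalised} family whose tail degrades below every exponential rate. A secondary care point is the general (non-Bernoulli) arrival model mentioned in Section \ref{centra}: if $E$ itself has no exponential moments the conclusion for $B$ is immediate, while if $\phi$ is finite near $0$ one still needs the monotonicity/continuity argument to conclude $\theta_N\to 0$; either way, the geometric mixing over $S$ (centralized) or over the battery states (distributed) is what makes the limit heavy-tailed.
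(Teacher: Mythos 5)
Your proposal is correct, but it formalizes and proves the claim by a genuinely different route. The paper fixes the tail-probability definition $\lim_{x\to\infty}e^{\lambda x}\mathbb{P}\{B>x\}=\infty$ and, for the distributed scheme, evaluates an \emph{iterated} limit: for fixed $x$ it sends $N\to\infty$ so that $\mathbb{P}\{B>x\}\to 1$ (because $Q_n\to 0$ drives the geometric ratio to $1$), and only then sends $x\to\infty$; for the centralized scheme it simply asserts that the Gumbel law $\exp(-e^{-x})$ arising from the extreme-value normalization is heavy-tailed. You instead track the abscissa of convergence of the MGF and show it collapses to $0$ in both models ($\theta_N=\log(1+\tfrac{1}{p(N-1)})$ via the geometric mixing over $S$, and $-\log r_N$ with $r_N\to 1$ via $Q_n\to 0$), using $M\ge B^{(1)}$ to transfer the divergence to the maximum. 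The two approaches isolate the same mechanism --- a geometric decay ratio driven to $1$ by the vanishing service probability --- but yours buys two things the paper's does not: it makes the order-of-limits issue explicit (your remark that each fixed-$N$ law is light-tailed and that only the unnormalised family degrades is exactly the point the paper leaves implicit), and it avoids the paper's weakest step, since the Gumbel distribution has tail $\sim e^{-x}$ and therefore does \emph{not} satisfy the paper's own condition (\ref{heavytailcond}) for $0<\lambda<1$; your centralized argument, which works with the unnormalised $B=\sum_{i=1}^{S}E_i$ directly, is the more defensible one. The paper's iterated-limit computation, for its part, is shorter and stays entirely within the stated definition rather than passing through exponential moments.
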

The definition of a heavy-tailed distribution is as follows (see Appendix 5 in \cite{AsmussenApp}): The random variable $B$ has a heavy-tailed distribution if
\begin{align}
  \lim_{x\rightarrow\infty}e^{\lambda x}\mathbb{P}\left\{B>x\right\}=\infty\label{heavytailcond}
\end{align}
for all $\lambda>0$. Thus, the key idea of the proof is to verify that the asymptotic distribution of the transmission power satisfies (\ref{heavytailcond}), and the detailed proof is given as follows.
\begin{proof}
  For the centralized case, it is straightforward to show that the distribution $\exp\left(-e^{-x}\right)$ is ``heavy-tailed''. For the distributed case, we have
\begin{align}
  \mathbb{P}\left\{B>x\right\}&=\left(\frac{(1-Q_n)p}{1-(1-Q_n)(1-p)}\right)^{x}\nonumber\\
  &~~\cdot\frac{ 1-(1-Q_n)(1-p)}{Q_n}\frac{\pi_0}{(1-Q_n)}\rightarrow1\nonumber
\end{align}
as $N\rightarrow\infty$, for $x>2$, due to (\ref{contetionsuc}). Therefore, we have
\begin{align}
  &\lim_{x\rightarrow\infty}\lim_{N\rightarrow\infty}e^{\lambda x}\mathbb{P}\left\{B>x\right\}\nonumber\\
  =&\lim_{x\rightarrow\infty}e^{\lambda x}=\infty\nonumber
\end{align}
for any $\lambda>0$, which proves the proposition.
\end{proof}
\begin{Remark}
Proposition \ref{heavytail} considers the probability of ``rare event'' that some transmitter has a very high instantaneous transmission power, which leads to a burst throughput.
\end{Remark}

\section{Conclusions}\label{fin}
In this paper, the multiuser energy diversity gain was investigated. For centralized access schemes, it was shown that the average throughput increases on a scale of $\log(\mu N)$, and the multiuser diversity gain comes from two aspects: the increase of total available energy accumulated over time; and the improvement in effective transmission power. Under the distributed access schemes, the average throughput could increase as well when the access strategy is carefully designed.

\section*{Appendices}

\subsection{Proof of Lemma \ref{stable}}
We prove this proposition by contradiction. Suppose that transmitter $1$ does not satisfy the condition, i.e., $\lim_{t\rightarrow\infty}\mathbb{P}\left\{B_{t}^{(1)}=\infty\right\}>0$. Note that such an event will happen only when transmitter $1$ keeps saving for an infinite number of time slots starting from, say, the $k_1$-th time slot, given the condition that the EH rate has finite nonnegative mean $\mu$ and variance $\sigma^2$. That is, as $t\rightarrow\infty$, we have
  \begin{align}
     \left\{B_{t}^{(1)}=\infty\right\}
     \Leftrightarrow\left\{\sum_{i=k_1}^{t-1}E_{i}^{(1)}=\infty\right\}.\nonumber
  \end{align}
  Moreover, if the event $\left\{\sum_{i=k_1}^{t}E_{i}^{(1)}=\infty\right\}$ happens as $t\rightarrow\infty$, according to the access scheme $M_t$, it is equivalent to the event that the energy level of transmitter $1$ is never the highest among those of all transmitters after time $k$, i.e.,
  \begin{align}
    \left\{\sum_{i=k_1}^{t-1}E_{i}^{(1)}=\infty\right\}
    \Leftrightarrow\left\{\sum_{i=k_1}^{t-1}E_{i}^{(1)}\le \max_{n\neq1}\left\{B_{t}^{(n)}\right\}=\infty\right\}\nonumber
  \end{align}
  as $t\rightarrow\infty$. Then, if the event $\left\{\max_{n\neq1}\left\{B_{t}^{(n)}\right\}=\infty\right\}$ happens, there must exist at least one transmitter, say the $2$-nd transmitter, such that it starts saving from time $k_2$ for an infinite number of time slots, i.e.,
  \begin{align}
    &\left\{\sum_{i=k_1}^{t-1}E_{i}^{(1)}\le \max_{n\neq1}\left\{B_{t}^{(n)}\right\}=\infty\right\}\nonumber\\
    \Rightarrow&\left\{\sum_{i=k_1}^{t}E_{i}^{(1)}\le \sum_{i=k_2}^{t-1}E_{i}^{(2)}=\infty\right\}~\hbox{as}~t\rightarrow\infty.\nonumber
  \end{align}
  Similar to the case of transmitter $1$, if the $2$nd transmitter also saves for an infinite number of time slots, there must be
  \begin{align}
    \left\{\sum_{i=k_2}^{t-1}E_{i}^{(2)}\le \max_{n\neq1,2}\left\{B_{t}^{(n)}\right\}=\infty\right\}~\hbox{as}~t\rightarrow\infty.\nonumber
  \end{align}
  Analogously, it directly implies that all $N$ transmitters must keep saving energy for infinite numbers of times slots. However, this cannot happen since by using the optimal access $\{M_t\}_{t\ge1}$, a transmitter is chosen to fulfil a transmission in each time slot. Hence, all $N$ transmitters cannot keep saving energy forever, which contradicts the assumption that the event $\left\{B_{t}^{(1)}=\infty\right\}$ exists as $t\rightarrow\infty$. Therefore, the lemma is proved.

\subsection{Proof of Lemma \ref{lemmaslln}}
We need to show that for $\forall\epsilon>0$,
\begin{equation}
  \mathbb{P}\left\{\lim_{N\rightarrow\infty}\left|\frac{B-\mu S)}{S}\right|>\epsilon\right\}=0.
\end{equation}
Let $X_i=E_i-\mu$. Note that SLLN holds for $X_1,X_2,\ldots,X_k$, i.e., $\sum_{i=1}^kX_i/k\rightarrow0$ as $k\rightarrow\infty$ with probability 1, which implies
\begin{equation}
  \sum_{k=1}^{\infty}\mathbb{P}\left\{\left|\sum_{i=1}^{k}X_i\right|>k\epsilon\right\}<\infty.
\end{equation}
Define
\begin{align}
   A_{k}=\left\{\left|\sum_{i=1}^{S}X_i\right|>S\epsilon, S=k\right\};~
      F_N=\bigcup_{k\ge N}A_k.\nonumber
\end{align}
Then, we have
\begin{align}
  \mathbb{P}\left\{\lim_{N\rightarrow\infty}\left|\frac{B-\mu S}{S}\right|>\epsilon\right\}
  =\mathbb{P}\left\{\bigcap_{N=1}^{\infty}F_N\right\}
  =\mathbb{P}\left\{A_k~\hbox{i.o.}\right\},\nonumber
\end{align}
where i.o. stands for ``infinitely often''. Next, we need to show $\mathbb{P}\left\{A_k~\hbox{i.o.}\right\}=0$.
\begin{align}
  &\sum_{k=1}^{\infty}\mathbb{P}\left\{A_k\right\}
  =\sum_{k=1}^{\infty}\mathbb{P}\left\{\left|\sum_{i=1}^{k}X_i\right|>k\epsilon\mid S=k\right\}\mathbb{P}\left\{S=k\right\}\nonumber\\
\le&\sum_{k=1}^{\infty}\mathbb{P}\left\{\left|\sum_{i=1}^{k}X_i\right|>k\epsilon\right\}<\infty.\nonumber
\end{align}
Therefore, $\mathbb{P}\left\{A_k~\hbox{i.o.}\right\}=0$, which implies that the convergence (\ref{slln}) holds by the Bore-Cantelli lemma \cite{WilliamsProb}.

\subsection{Proof of Lemma \ref{propclt}}
Let
\begin{equation}
  \frac{B-\mu S}{\sigma\sqrt{S}}=\sum_{i=1}^{S}\frac{E_i-\mu}{\sigma\sqrt{S}}=\sum_{i=1}^{S}\frac{Y_i}{\sqrt{S}}
\end{equation}

Then, we calculate its characteristic function as
\begin{align}
  &\mathbb{E}\left[\exp\left(t\sum_{i=1}^{S-1}\frac{Y_i}{\sqrt{S}}\right)\right]
  =\mathbb{E}\left[\prod_{i=1}^{S}\exp\left(\frac{Y_i}{\sqrt{S}}\right)\right]\nonumber\\
  =&\sum_{s=1}^{\infty}\mathbb{E}\left[\left.\prod_{i=1}^{s}\exp\left(\frac{Y_i}{\sqrt{s}}\right)\right|S=s\right]
     \mathbb{P}\left\{S=s\right\}\nonumber\\
  =&\sum_{s=1}^{\infty}\left(\mathbb{E}\left[\exp\left(\frac{Y_i}{\sqrt{s}}\right)\right]\right)^{s}
     \mathbb{P}\left\{S=s\right\}\nonumber\\
  =&\sum_{s=1}^{\infty}\left(1-\frac{t^2}{2s}+o\left(\frac{t^2}{s}\right)\right)^{s}
     \mathbb{P}\left\{S=s\right\}.\label{charac}
\end{align}
Note that for a large $s$, we have the approximation: $\left(1-\frac{t^2}{2s}+o\left(\frac{t^2}{s}\right)\right)^{s}\approx e^{-\frac{t^2}{2}}$ when $s\ge K$. Thus, we obtain
\begin{align}
  (\ref{charac})=&\sum_{s=1}^{K-1}\left(1-\frac{t^2}{2s}+o\left(\frac{t^2}{s}\right)\right)^{s}
     \mathbb{P}\left\{S=s\right\}\nonumber\\
    &+\sum_{s\ge K}e^{-\frac{t^2}{2}}\mathbb{P}\left\{S=s\right\}.\label{charac2}
\end{align}
Further, by letting $N\rightarrow\infty$, we have
\begin{align}
  \lim_{N\rightarrow\infty}(\ref{charac2})=&\lim_{N\rightarrow\infty}\sum_{s=1}^{K-1}\left(1-\frac{t^2}{2s}+o\left(\frac{t^2}{s}\right)\right)^{s}
     \mathbb{P}\left\{S=s\right\}\nonumber\\
    &+e^{-\frac{t^2}{2}}\lim_{N\rightarrow\infty}\sum_{s\ge K}\mathbb{P}\left\{S=s\right\}\nonumber\\
  =&e^{-\frac{t^2}{2}}\lim_{N\rightarrow\infty}\mathbb{P}\left\{S\ge K\right\}=e^{-\frac{t^2}{2}}.\nonumber
\end{align}
Thus, we obtain that the characteristic function of $\frac{B-\mu S}{\sigma\sqrt{S}}$ converges to $e^{-\frac{t^2}{2}}$ as $N\rightarrow\infty$. Finally, by the L$\acute{e}$vy's continuity theorem (Chapter 18 in \cite{WilliamsProb}), we obtain the conclusion.

\subsection{Proof of Proposition \ref{uniquedis}}
The model given by Fig. \ref{Markovbattery} is an Markov chain with an infinite countable state space, and it has a unique stationary distribution if and only if it has at least one positive recurrent state according to Theorem 26.3 in \cite{Markovinfi}. However, it is difficult to directly show that a state is positive recurrent. Thus, we first derive the form of the stationary distribution $\pi=[\pi_1~\pi_2~\cdots]$, and then show that it is unique.

Assume $\sum_{i=0}^{\infty}\pi_i=1$. Then, by solving $\pi=\pi W$, where $W$ is the transition probability matrix given by (\ref{transmatrix}), we have
\begin{align}
  \pi_0=\frac{Q_n}{p+Q_n},~
  &\pi_1=\frac{p}{1-(1-Q_n)(1-p)}\pi_0\nonumber\\
  &\pi_2=\frac{(1-Q_n)p}{1-(1-Q_n)(1-p)}\pi_1\nonumber\\
  &\pi_3=\frac{(1-Q_n)p}{1-(1-Q_n)(1-p)}\pi_2 \nonumber\\
  &\cdots \nonumber
\end{align}
Thus, we obtain that $\pi$ is given by (\ref{stationpi0}) and (\ref{stationpi}). Next, we check $\sum_{i=0}^{\infty}\pi_i=1$, which can be verified as follows:
\begin{align}
  \sum_{i=0}^{\infty}\pi_i=&\pi_0+\frac{p}{1-(1-Q_n)(1-p)}\pi_0 + \nonumber\\ &\sum_{i=2}^{\infty}\left(\frac{(1-Q_n)p}{1-(1-Q_n)(1-p)}\right)^{i}\frac{\pi_0}{(1-Q_n)} \nonumber\\
  =&\pi_0+\frac{p}{1-(1-Q_n)(1-p)}\pi_0 +\nonumber\\
  &\frac{(1-Q_n)p^{2}}{1-(1-Q_n)(1-p)}\frac{\pi_0}{Q_n} \nonumber\\
  =&\frac{1}{p+Q_n}\left(Q_n+\frac{Q_np}{Q_n+(1-Q_n)p}\right.\nonumber\\
    &~\left.+\frac{(1-Q_n)p^2}{Q_n+(1-Q_n)p}\right)\nonumber\\
  =&\frac{1}{p+Q_n}\frac{(Q_n+p)^2-Q_np(Q_n+p)}{Q_n+(1-Q_n)p}=1.\nonumber
\end{align}
Thus, $\pi$ is a stationary distribution. We observe that state zero is positive recurrent since $\frac{1}{\pi_0}<\infty$, and thus the stationary distribution $\pi$ is unique by Theorem 26.3 in \cite{Markovinfi}.

\end{spacing}

\begin{thebibliography}{1}


\bibitem{DTse2005}
D. N. C. Tse and P. Viswanath, {\it Fundamentals of Wireless Communication}, Cambridge U.K.: Cambridge Univ. Press, 2005.


\bibitem{XQin2006}
X. Qin and R. Berry, ``Distributed approaches for exploiting multiuser diversity in wireless networks,'' {\it IEEE Trans. Inf. Theory}, vol. 52, no. 2, pp. 392-413, Feb. 2006.

\bibitem{ABN2012}
A. B. Narasimhamurthy, C. Tepedelenlio$\breve{g}$lu, and Y. Zhang, ``Multi-user diversity with random number of users,'' {\it IEEE Trans. Wireless Commun.}, vol. 11, no. 1, pp. 60-64, Jan. 2012.


\bibitem{SS2011}
J. Gubbi, R. Buyya, S. Marusic, and M. Palaniswami, ``Internet of Things (IoT): A vision, architectural elements, and future directions,'' \emph{Future Gener. Comput. Syst.}, vol.~29, no.~7, pp.~1645-1660, Sept. 2013.

\bibitem{JAndrews2014}
H. S. Dhillon, Y. Li, P. Nuggehalli, Z. Pi, and J. G. Andrews, ``Fundamentals of heterogeneous cellular networks with energy harvesting,'' {\it IEEE Trans. Wireless Commun.}, vol. 13, no. 5, pp. 2782-2797, May 2014.

\bibitem{Bert2007}
B. J. M. de Vries, D. P. van Vuuren, and M. M. Hoogwijk, ``Renewable energy sources: Their global potential for the first-half of the 21st century at a global level: An integrated approach,'' {\it Energy policy}, vol. 35, no. 4, pp. 2590-2610, Apr. 2007.


\bibitem{KHuang2013}
K. Huang, ``Spatial throughput of mobile ad hoc networks powered by energy harvesting,'' {\it IEEE Trans. Inf. Theory}, vol. 59, no. 11, pp. 7597-7621, Nov. 2013.


\bibitem{Ozel2011}
O. Ozel, K. Tutuncuoglu, J. Yang, S. Ulukus, and A. Yener, ``Transmission with energy harvesting nodes in fading wireless channels: optimal policies,'' {\it IEEE J. Sel. Areas Commun.}, vol. 29, no. 8, Sept. 2011, pp. 1732-1743.

\bibitem{CKHo2012}
C.~K.~Ho and R. Zhang, ``Optimal energy allocation for wireless communications with energy harvesting constraints,'' \emph{IEEE Trans. Signal Processing}, vol.~60, no.~9, pp.~4808-4818, Sept.~2012.


\bibitem{HuangZhangCui2013}
C. Huang, R. Zhang, and S. Cui, ``Throughput maximization for the Gaussian relay channel with energy harvesting constraints,'' {\it IEEE J. Sel. Areas Commun.}, vol. 31, no. 8, pp. 1469-1479, Aug. 2013.

\bibitem{LiHang2014B}
H. Li, C. Huang, P. Zhang, S. Cui, and J. Zhang, ``Distributed opportunistic scheduling for energy harvesting based wireless networks: A two-stage probing approach,'' {\it IEEE/ACM Trans. Netw.}, April, 2015.


\bibitem{SUlukus2015}
S. Ulukus, A. Yener, E. Erkip, O. Simeone, M. Zorzi, P. Grover, and K. Huang, ``Energy harvesting wireless communications: A review of recent advances,'' {\it IEEE J. Sel. Areas Commun.}, vol. 33. no. 3, pp. 360-381, Mar. 2015.












































\bibitem{BovierExtreme}
A. Bovier, ``Extreme values of random processes.'' Lecture Notes Technische Universit$\ddot{a}$t Berlin, 2005.

\bibitem{PEExtremal}
P. Embrechts, C. Kluppelberg, and T. Mikosch, {\it Modelling extremal events: for insurance and finance}. Berlin Heidelberg: Springer-Verlag, 1997.

\bibitem{JustinGuassian}
Justin Romberg, ``Maximum of a sequence of gaussian random variables''. OpenStax CNX, Feb 27, 2012. Available at: http://cnx.org/contents/8bd316d8-6442-4f5a-a597-aef1d6202f87@1.

\bibitem{RRajesh2014}
R. Rajesh, V. Sharma, and P. Viswanath, ``Capacity of Gaussian channels with energy harvesting and processing cost,'' {\it IEEE Trans. Inf. Theory}, vol. 60, no. 5, pp. 2563-2575, May 2014.



\bibitem{AsmussenApp}
S. Asmussen, {\it Applied Probability and Queues}, 2nd ed. New York: Spring-Verlag, 2003.





\bibitem{WilliamsProb}
D. Williams, {\it Probability with Martingales}. Cambridge university press, 1991.

\bibitem{Markovinfi}
D. Gamarnik and J. Tsitsiklis, {\it 6.436J Fundamentals of Probability}, Massachusetts Institute of Technology: MIT OpenCourseWare, Fall 2008. Available at: http://ocw.mit.edu.






\end{thebibliography}
\end{document}